\documentclass[a4paper,11pt]{article}
\usepackage[english]{babel}
\usepackage{amsthm}
\usepackage[dvips]{graphicx}
\usepackage{amssymb,amsmath,eepic,epic,colordvi,mathrsfs}
\usepackage{float}
\usepackage{fullpage}
\usepackage{yfonts}

\newcommand{\ignore}[1]{}

\DeclareMathAlphabet{\mathpzc}{OT1}{pzc}{m}{it}

\theoremstyle {plain}

\newtheorem{theorem}{Theorem}[section]
\newtheorem{proposition}[theorem]{Proposition}
\newtheorem {corollary}[theorem]{Corollary}
\newtheorem{definition}[theorem]{Definition}
\newtheorem{lemma}[theorem]{Lemma}

\title{Simple Executions of Snapshot Implementations}
\author{Gal Amram, Lior Mizrahi and Gera Weiss\\ Department of Computer Science,\\Ben-Gurion University, Beer Sheva Israel, 84105.\\ \{galamra,liormizr,geraw\}@cs.bgu.ac.il }

\begin{document}
\newcommand{\bigq}{\mathbb{Q}}
\newcommand{\bb}{\bfseries\large}
\newcommand{\class}{\mathcal}
\newcommand{\mbi}[1]{\textbf{\em{#1}}}
\newcommand{\bi}[1]{\textbf{\textit{#1}}}
\newcommand{\ita}[1]{\textit{#1}}
\newcommand{\p}{\cdot}
\newcommand{\req}{\emptyset}
\newcommand{\init}{\mathcal{J}}
\newcommand{\power}[1]{\class{P}(#1)}
\newcommand{\bigpower}[2]{\class{P}^{#1}(#2)}
\newcommand{\symetric}{\bigtriangleup}
\newcommand{\gx}[3]{\Gamma_{(#1,#2)}(#3)}
\newcommand{\bd}[3]{\Delta_{(#1,#2)}(#3)}
\newcommand{\bp}[3]{\Phi_{(#1,#2)}(#3)}
\newcommand{\eab}{\eta_{\alpha \beta}}
\newcommand{\sab}{\sigma_{\alpha\beta}}
\newcommand{\dab}{\class{D}_{\alpha\beta}}
\newcommand{\cab}{\class{C}_{\alpha\beta}}
\newcommand{\C}[1]{\class{#1}}
\newcommand{\ol}{\overline}
\newcommand{\into}{\longrightarrow}
\newcommand{\ifff}{\Longleftrightarrow}
\newcommand{\imply}{\Longrightarrow}
\newcommand{\since}{\Longleftarrow}
\newcommand{\la}{\langle}
\newcommand{\ra}{\rangle}
\newcommand{\andd}{\wedge}
\newcommand{\orr}{\vee}
\newcommand{\ft}{(\C F \C T)_Q}
\newcommand{\pathopen}[3]{#1_{(\ol{#2},#3)}}
\newcommand{\fpi}[3]{#1_*(\Pi(#2,#3))}
\newcommand{\xpath}[2]{(\ol {#1},#2)}
\newcommand{\qn}[2]{\Lambda(#1,#2)}
\newcommand{\jh}{\class J \class H}
\newcommand{\bos}{\boldsymbol}
\newcommand{\f}{\mathbb}
\newcommand{\rss}{\la\psi_i:i\in \lambda\ra}
\newcommand{\ito}[1]{\overset{#1}{\longrightarrow}}
\newcommand{\pcomplex}[2]{\class #1(\class #2)}
\newcommand{\set}{{\sf set }}
\newcommand{\setv}[1]{{\sf set($#1$)}}
\newcommand{\compute}{{\sf compute }}
\newcommand{\Keywords}[1]{\par\noindent 
{\small{\bfseries Keywords\/}: #1}}
\newcommand{\updatev}{{\sf update($v$)}}
\newcommand{\update}{{\sf update }}
\newcommand{\scan}{{\sf scan }}
\newcommand{\vectoralpha}[1]{\alpha_0^#1,\dots,\alpha_{n-1}^#1}
\newcommand{\valsvectoralpha}[2]{(val(\alpha_0^#1(#2)),\dots,val(\alpha_{n-1}^#1(#2)))}
\newcommand{\emptyvectoralpha}{\alpha_0,\dots,\alpha_{n-1}}
\newcommand{\alphai}[1]{\alpha_#1:\text{\scan events}\into p_#1\text{-\update events}}
\newcommand{\st}{sched(\tau)}
\newcommand{\updatex}[1]{{\sf update($#1$)}}
\newcommand{\snap}{ \scriptstyle\mathcal{SNAP} }

\renewcommand\refname{6 \ \  References}

\maketitle
\begin{abstract}
The well known snapshot primitive in concurrent programming allows for $n$-asynchronous processes to write values to an array of single-writer registers and, for each process, to take a snapshot of these registers. In this paper we provide a formulation of the well known linearizability condition for snapshot algorithms in terms of the existence of certain mathematical functions. In addition, we identify a simplifying property of snapshot implementations we call ``schedule-based algorithms". This property is natural to assume in the sense that as far as we know, every published snapshot algorithm is schedule-based. Based on this, we prove that when dealing with schedule-based algorithms, it suffices to consider only a small class of very simple executions to prove or disprove correctness in terms of linearizability. We believe that the ideas developed in this paper may help to design automatic verification of snapshot algorithms. Since verifying linearizability was recently proved to be EXPSPACE-complete, focusing on unique objects (snapshot in our case) can potentially lead to designing restricted, but feasible verification methods.
\end{abstract}

\section{Introduction}
 
The snapshot object was introduced by Afek et al. \cite{snap1,snap2}, independently by Anderson \cite{comp} and by Aspens and Herlihy \cite{PRAM}. The snapshot object is shared by $n$ processes, $p_0,\dots,p_{n-1}$. This object is divided into $n$ segments when the $i$-th segment is ``owned" by process $p_i$. Each process $p_i$ can write values to its segment by invoking an \updatev \ operation with an argument $v$ taken from some fixed set of values $Vals$. In addition, each process can scan the entire array by invoking a \scan operation. Thus, \scan returns a vector consisting of $n$ elements from $Vals$. The snapshot object is an efficient tool for achieving synchronization between $n$ processes in the shared memory model (see chapter 9 in \cite{DC} for exact definitions), since it allows the processes to scan the entire shared memory\footnote{In this model it is suffice to assume that each process use only one single-writer register.} at an atomic action. Therefore, it is not surprising that the snapshot object is so well-studied, especially due to the fact that it can be implemented using only single-writer registers.
 
In \cite{snap1},\cite{comp} and \cite{PRAM}, while introducing the snapshot object, the correctness criterion adopted by the authors is the Linearizability criterion \cite{Lin} which is, nowadays the standard correctness condition for implementation of concurrent objects. Informally, Linearizability is the requirement that in any execution, each procedure execution can be identified with a unique moment during its actual execution, such that this identification yields a correct sequential execution (according to the specification of the object). The importance of this criterion is that it ensures an execution appears to a user as if it is sequential. This stands in contrast to other correctness conditions. For example, this property does not hold if only sequential consistency \cite{seq} is required. As linearizability and sequential consistency are the main correctness criteria accepted by researchers (see \cite{seq-lin} for detailed discussion), it is natural that the Linearizability criterion is widely adopted, as many authors claim \cite{DGH},\cite{GHW},\cite{GY}, \cite{Roman}, \cite{Vaf}.
 
The Linearizability criterion successfully formulates what one would consider as ``good behavior" of a concurrent system. Due to the complex nature of distributed systems, the research in the field of linearizability is deep and complicated. We see three aspects concerning this issue
 
\begin{enumerate}
\item Implementing concurrent objects is hard. One can use the trivial solution and lock the system before every operation. However, it seems that avoiding such trivial solutions is solely at the hand of experts and researchers.
\item Proving correctness of linearizable implementations is difficult. Examining known-results in literature reveals that in many occasions, proofs tend to be long and technical. Moreover, many times proofs include clever and sophisticated ideas, so finding a correct implementation is sometimes only half of the work required of the programmer.

\item Automatic verification of linearizability is a hard problem. In general, it is undecidable \cite{undecidable}, and if the number of processes is fixed and all methods are finite, the problem is EXSPSPACE-complete \cite{expspace}.
\end{enumerate}

This paper includes three contributions. In theorem
\ref{main-prop-of-section-1-correctness-condition}, we provide a necessary and sufficient condition
for linearizability of executions of snapshot implementations. In definition
\ref{algorithm-assumption} we introduce the notion of schedule-based snapshot algorithm. This notion
captures a natural property of concurrent implementations and in fact, we are not familiar with any
published snapshot implementation which is not scheduled-based. Finally, in what we consider as our
main contribution, we prove in theorem \ref{simple-executions-are-suffice} that a schedule-based
snapshot algorithm is correct if all its simple executions are correct. A simple execution is an
execution in which all processes, excluding two processes, invoke only \updatex 0 and \scan
operations. The remaining two processes may also execute an \updatex 1 procedures, but once a
process executes an \updatex 1 procedure, it is not allowed to invoke an \updatex 0 procedure again
for the rest of the execution.

Informally, a snapshot algorithm is scheduled-based if at any execution, the values that a \scan operation returns depend on the interleaving of the actions performed by the processes, and not on the actual values that the \update procedures wrote to the segments of the snapshot object. To illustrate the idea behind this notion, consider an execution in which, whenever a process executes an \update procedure, it invokes \updatex m when $m$ is a counter that counts the number of \update 
operations executed by the process. Now assume a different execution in which the processes take steps at the same order, but instead of calling \updatex m, the $m$-th \update operation of each process is \updatex {m+1}. In this case, we expect that if a \scan operation at the first execution returns $(k_1,\dots,k_n)$, then there is a \scan 
operation at the second execution, that occurred at the ``same time" and returned $(k_1+1,\dots,k_n+1)$. This is a natural property to assume, since the snapshot object deals with synchronization between reads and writes, and the actual values that the processes write to the segments are immaterial. It can be observed that authors refer to their algorithms as schedule-based without formulating exactly the scheduled-based notion. When Attiya, Herlihy and Rachman \cite{LA} write: 
\begin{quote}
we can ignore the real values written to the segments and refer only to the sequence numbers\footnote{These sequence numbers counts the number of \update operations.} that are written there.
\end{quote}
they mean that their algorithm is scheduled based. We understand their statement in the following manner: since the values returned by \scan operations depend on the interleaving of the execution, but not on the actual values written to the segments, it suffice to assume that each process counts the number of \update operations, and write the value of this counter into its segment.

 We do not claim that any snapshot algorithm is schedule-based and in fact, it is not difficult to transform a correct schedule-based implementation into a correct not-schedule-based algorithm. But since (for the best of our knowledge) every published snapshot implementations is scheduled-based, in practice, a non-schedule-based algorithm is likely to be an algorithm obtained by optimization of some schedule-based implementation.

We mentioned three difficulties concerning Linearizability: constructing correct implementations is hard, proving correctness is difficult, and the problem is EXPSPACE-complete. We demonstrate now how our contributions address these three issues.

\begin{enumerate}

\item {\bfseries A necessary and sufficient condition for correctness of executions of snapshot implementation.} Our condition provides an alternative framework for designing correct snapshot implementation and for proving correctness of snapshot implementations. Instead of trying to achieve linearizability, one needs to try to satisfy our condition. Possibly, some programmers will find our condition easier to work with.

During the writing process of this paper, we were surprised to find out that our condition is similar to the condition in Anderson's ``shrinking lemma" \cite{comp}. However, we believe that our condition is more natural and it provides a better framework for programmers than the shrinking lemma. Our condition deals with the existence of functions between \scan events and \update events that satisfy some properties. Informally, for each $i<n$, we have a function $\alpha_i$ such that if $S$ is a \scan event, $\alpha_i(S)$ is the $p_i$-\update event in which $p_i$ wrote to its segment the value read by $S$. It is clear that these functions must satisfy some properties. For example, there cannot be a $p_i$-\update event between $\alpha_i(S)$ and $S$. In a similar way, our properties classify all the ``bugs" that might occur in an execution. Thus, while proving correctness, it is reasonable that the programmer will be able define the functions $\alpha_0,\dots,\alpha_{n-1}$. She just need to explain which \update events wrote the values returned by a \scan event. To summary, we provide the programmer with a list of properties, and She needs to check that these bugs never arise in any execution, while writing the code or while proving correctness.

\item {\bfseries Scheduled-based algorithms.} Recently, verifying Linearizability was proved to be EXPSPACE-complete \cite{expspace}. Thus, complete verification is infeasible. One way to overcome this gap is to check for errors in short executions  \cite{BDMT},\cite{LCLS},\cite{VY}. Another way is to ask the user to specify the linearization points \cite{BLMRM},\cite{Vaflp}. A remarkable result can be found in \cite{Vaf}. The key idea in \cite{Vaf} is to assume that the linearization points of the algorithm satisfy some properties. Since the general case is EXPSPACE-complete, it is necessary to adapt such assumptions, although the assumption in \cite{Vaf} excludes some known implementations, such as the queue implementation in \cite{Lin}. Here we suggest the schedule-based property. We see potential in this natural assumption, and it could lead to results concerning automatic verification of algorithms with reasonable time-complexity.

We also suggest to look at specific objects. The general case might be difficult, but it is possible that for some specific objects, verification can be feasible. In this paper we focus on the snapshot object, but it is straightforward to generalize our notion for other objects as well, as long as the values returned by operations depend on the ordering of method invocations and not on the exact arguments (for example: stack, queue, etc. in contrast to test-and-set). Thus, the ideas we develop in this paper may lead to similar results regarding other objects and data-structures, and may lead to improved verification techniques.

\item {\bfseries Reduction to simple executions.} Alur et al. \cite{AMP} showed that linearizability is decidable when the number of processes is fixed and the implementation is finite (no unbounded registers are used such as integers, etc.). Regarding the snapshot object, it is possible that an implementation is infinite only because $Vals$ is an infinite set. The traditional way to overcome this issue, is to check correctness under the assumption that the processes invoke only $\scan$, \updatex 0 or \updatex 1 operations. In theorem \ref{reduction-to-simple-executions} we prove that this assumption is suffice for schedule-based implementations. Therefore, we conclude that if $\snap$ is a schedule-based snapshot algorithm, and if only finitely many configurations of $\snap$ are reachable if the processes execute operations from $\{\scan$, \updatex 0, $\sf{update(1)}\}$, then it is decidable to determine if $\snap$ is correct. This hold although the verification approach in \cite{AMP} cannot be applied on $\snap$ directly. Moreover, our reduction to simple executions also reduces the running time of the verification procedure in \cite{AMP} (in compare to the traditional approach mentioned above). Furthermore, our reduction shows that under some natural assumptions it suffice to consider only a small and simple class of executions. We believe that there is high potential in this reduction for obtaining a polynomial verification method of  schedule-based snapshot algorithms. 

In addition, when one tries to develop a snapshot implementation, naturally, his construction is likely to result in a schedule-based implementation. Thus, since we prove that is suffice to look at simple executions, we provide another framework for programmers. Instead of concerning that every execution is linearizable, one needs to consider only simple executions of the implementations. Therefore, our result can help designing correct implementations and can ease the process of writing proofs.

\end{enumerate}

\ignore{

\begin{enumerate}

\item Finding correct implementation is hard.

\item Proving Linearizability is hard.

\end{enumerate}

Theorem \ref{main-prop-of-section-1-correctness-condition} provides a condition equivalent to linearizability of snapshot implementations. Thus, we provide here an alternative framework for designing correct snapshot implementation and for proving correctness of snapshot implementations. Instead of trying to achieve linearizability, one needs to try to satisfy our condition.

 During the writing process of this paper, we were surprised to find out that our condition is similar to the condition in Anderson's ``shrinking lemma" \cite{comp}. However, we believe that our condition is more natural and it provides a better framework for programmers than the shrinking lemma. Our condition deals with the existence of functions between \scan events and \update events that satisfy some properties. Informally, for each $i<n$, we have a function $\alpha_i$ such that if $S$ is a \scan event, $\alpha_i(S)$ is the $p_i$-\update event in which $p_i$ wrote to its segment the value read by $S$. It is clear that these functions must satisfy some properties. For example, there cannot be a $p_i$-\update event between $\alpha_i(S)$ and $S$. In a similar way, our properties classify all the ``bugs" that might occur in an execution. It is reasonable that the programmer will be able define the functions $\alpha_i,\alpha_{n-1}$. She just need to explain which \update events wrote the values returned by a \scan event. To summary, we provide the programmer with a list of properties, and She needs to check that these bugs never arise in any execution, while writing the code or while proving correctness.

\begin{itemize}
\item[3.] Automatic verification of Linearizability is hard.
\end{itemize} 

Alur et al. \cite{AMP} showed that linearizability is decidable when the number of processes is fixed and the implementation is finite (no unbounded registers are used such as integers, etc.). Regarding the snapshot object, it is possible that an implementation is infinite only because $Vals$ is an infinite set. The traditional way to overcome this issue, is to check correctness under the assumption that the processes invoke only \scan, \updatex 0 or \updatex 1. In theorem \ref{reduction-to-simple-executions} we prove that this assumption is suffice for schedule-based implementations. Therefore, we conclude that if $\snap$ is a schedule-based snapshot algorithm, and if only finitely many configurations of $\snap$ are reachable if the processes execute operations from $\{$\scan,\updatex 0,\updatex 1 $\}$, then it is decidable to determine if $\snap$ is correct. Moreover, our reduction to simple execution also reduces the running time of the verification procedure. We defined the notion of scheduled-based snapshot algorithm. However, it straightforward to generalize this notion for other objects as well, as long as the values returned by operations depend on the ordering of method invocations and not on the exact arguments (for example: stack, queue, etc. in contrast to test-and-set). Thus, the ideas we develop in this paper may lead to similar results regarding other objects and data-structures.

Recently, verifying Linearizability was proved to be EXPSPACE-complete. Thus, complete verification is infeasible. One way to overcome this gap is to check for errors in short executions  \cite{BDMT},\cite{LCLS} \cite{VY}. Another way is to ask the user the specify the linearization points \cite{BLMRM},\cite{Vaflp}. A remarkable result can be found in \cite{Vaf}. The key idea in \cite{Vaf} is to assume that the linearization points of the algorithm satisfy some properties. Since the general case is EXPSPACE-complete, it is necessary to adapt such assumptions. Here we suggest the natural schedule-based property as a potential assumption for designing automatic verification in reasonable time-complexity. We also suggest to look at specific objects. The general case might be difficult, but it is possible that for some specific objects, verification can be feasible. According to our reduction to simple execution, we hope that our results will lead to a verification of snapshot implementations in polynomial time.
}

\section{Preliminaries}

\subsection{Executions of Snapshot Algorithms}
A snapshot algorithm $\snap$ is an implementation of two methods: \update and {$\scan$}. \update gets as argument a value from a known fixed set of values, $Vals$, and \scan returns a vector of $n$ values from the set $Vals$, where $n$ is the number of processes. Formally, each method is modeled as a transition system, and a process is a transition system that nondeterministically executes \scan and $\update(v)$ operations with argument $v\in v$. More precisely, from the initial state of process $p_i$ (which is a transition system) there are arrows for each operation \scan or $\update(v)$, and each last action in a method ends in  the initial state of $p_i$. For a fixed number of processes $n$, we identify an algorithm $\snap$ with the parallel composition of the processes i.e. ${\snap}=p_0||p_1||\dots||p_{n-1}$ (see chapter 2 in \cite{MCbook}).

An execution $\tau$ of $\snap$ is a finite sequence of actions (named execution fragment in \cite{MCbook}) that the processes execute according to the code of the algorithm $\snap$. In an execution $\tau$, some of the methods invocations return and some are not. We say that an operation is {\textbf{\emph{complete}}}, if the process that executed the operation has executed all the commands and returned. Otherwise, the operation is said to be \bi{pending}. For a process $p_i$, each action by $p_i$ is also named an action or a low level event, and each operation also named a high level event (see \cite{Lamport} for further discussion). When the context is clear, we use the term \ita{event} without specifying if it is a low level or a high level event. Formally, a complete $p_i$-event in an execution $\tau$ is a pair $(s,t)\in \f N\times \f N$  such that
\begin{enumerate}
\item $s<t$.
\item $\tau(s)$ is the first action by $p_i$ of an operation.
\item $\tau(t)$ is the last action by $p_i$ of an operation. 
\item For each $s<l<t$, $\tau(l)$ is not a first action of an operation by $p_i$.
\end{enumerate}
Since pending operations have no last action, we define a pending event to be a pair $(s,\infty)$, $s\in \f N$ so that:
\begin{enumerate}
\item $\tau(s)$ is the first action by $p_i$ of an operation.
\item For each $s<l$, if $\tau(l)$ is defined, then it is not a first action of an operation by $p_i$.
\end{enumerate}
A high level event is either a \scan event or an \update event. For a high level event $E$, we also write that $E$ is a $p_i$-\scan event or a $p_i$-\update event for denoting which process executed the operation $E$.

For an execution $\tau$, $complete(\tau)$ denotes the set of all complete high level events in $\tau$, and $events(\tau)$ is the set of all high level events in $\tau$, pending and complete. Clearly, $complete(\tau)\subseteq events(\tau)$. In addition, if $E$ is an \update event we use $val_\tau(E)$ to denote the argument with which $E$ has been invoked, and if $E$ is a complete \scan event, $val_\tau(E)$ is the $n$-elements vector that $E$ returns. In addition, if $E$ is a complete \scan event we use $val_{\tau:i}(E)$ to denote the element at the $i$-th entry of $val_\tau(E)$. In case that $\tau$ is clear from the context, we  use $val(E)$ and $val_i(E)$ instead of $val_\tau(E)$ and $val_{\tau:i}(E)$. 



The low level events in an execution $\tau$ are linearly ordered by the precedence relation, $<$. We naturally extend this relation to high level events. For two high level events $E_1=(s_1,t_1)$ and $E_2=(s_2,t_2)$ we write $E_1<E_2$ if $t_1<s_2$, and we say in this case that $E_1$ precedes $E_2$ and that $E_2$ follows $E_1$. Note that no high level event follows a pending operation. Although $<$ is a linear ordering over the set of low level events, in many cases, $<$ is only a partial ordering over the set of high level events since it is possible that for two high level events $E_1$ and $E_2$, neither $E_1<E_2$ or $E_2<E_1$. Such high level events are said to be concurrent. 
$<$ also  relates low level events with high level events as follows: if $E=(s,t)$ is an high level event and $e=\tau(l)$ a low level event, we write $e<E$ if $l<s$ and $E<e$ is $t<l$. Furthermore, if $s\leq l\leq t$ and both $E$ and $e$ are $p_i$-events for a process $p_i$, then we write $e\in E$.   

For the purpose of our discussion, for simplicity, we assume that in any execution $\tau$ each process executes an initial \update operation in which the process writes the initial values to the registers (or just perform an initialization, when the exact form of the initialization depends on the communication media). Thus, we assume that in each execution $\tau$ there are $n$ initial \update events that precede any other high level event. These \update events are not necessarily follow the code of the algorithm, but they are considered as high level events in any execution.

\subsection{Linearizability}

Linearizability is the standard correctness condition for implementations of concurrent objects \cite{Lin}. Roughly speaking, an execution is linearizable if each operation can be seen as if it was executed in a unique instantaneous moment (the linearization point of the operation), during its actual execution. The requirement is that the identification of the high level events with their linearization points, yields a sequential execution that satisfies the correctness condition of the object: the sequential specification.

In an execution $\tau$, some operations are complete and some are pending. Some of the pending operations has affected the system and some may be neglected. Thus, the linearizability condition described above relates to all the complete operations in addition to some of the pending operations.

Now we describe the requirement  formally. Let $\tau$ be an execution of a snapshot algorithm $\snap$, and let $<$ be the precedence relation defined over $events(\tau)$. $\tau$ is linearizable if there is a set of events
$$complete(\tau)\subseteq \mathcal{E}  \subseteq events(\tau)$$ 
and a linear ordering $\prec$ on $\class E$ that extends $<$, so that the linear ordering $(\class E,\prec)$ satisfies the sequential specification of the snapshot object, presented below in figure \ref{snapshot-specification}. 


\begin{figure}[H]

\fbox{
\begin{minipage}[t]{140mm}

\begin{itemize}
\item[1.] The procedure executions are partitioned into {\sf update($v$)} and {\sf scan} operations, and are totally ordered by $\prec$.
$n$ initial {\updatev} operations are assumed, each initial {\updatev} operation has been executed by a different process. These operations precede all other operations in $\prec$.

\item[2.] For a \scan event $S$, let $U_i$ denote the maximal \update operation executed by $p_i$ such that $U_i\prec S$ thus $val(S)=(val(U_0),\dots,val(U_{n-1}))$.  

\end{itemize}

\end{minipage}
}
\caption{The snapshot sequential specification.}
\label{snapshot-specification} 
\end{figure} 

Therefore, an execution $\tau$ of a snapshot algorithm $\snap$ is said to be correct if it is linearizable, and a snapshot algorithm $\snap$ is correct if all of its executions are correct.

\ignore{
\subsection{Finite Implementations}

Alur, McMillan and Peled proved in \cite{} that it is decidable to determine if an implementation is linearizable with respect to a sequential specification, in case that number of processes is fixed and that all methods can be modeled as finite transition systems. As a consequence, since we do not make assumptions on the size of the set $Vals$ (it can be infinite), Alur et al.  approach \cite{} cannot be applied on snapshot implementations. (unless we assume that $Vals$ is finite and then it can be applied on finite implementations.)

In this paper we can verify infinite snapshot algorithms in case that the methods of the algorithm are infinite only since $Vals$ is infinite set. As an example, consider the classical snapshot algorithms in \cite{}. The algorithm in section 3 is clearly infinite since each process use a field named $seq$ which counts the number of \update events. Now, the algorithm in section 4 is also infinite since each register $r_i$ store a value $data\in Vals$ and possibly $|Vals|=\infty$. But, in the second case, if all \update events are invoked with values taken from some finite range, the registers may store only finitely many different values and we get a finite algorithm. This property is not satisfied by the algorithm in section 3 and the verification of such algorithms is beyond the scope of this paper. Regarding to implementation in section 4, our approach can verify such implementations. So, for the purpose of our discussion, an implementation is finite in case that only finitely states of the system $p_0||\dots||p_{n-1}$ are reachable, in case that the \update operations are invoked with values taken from a finite range. Considering our example again, the algorithm in section 3 in \cite{} is infinite while the implementation in section 4 is finite.


}

\subsection{Schedule-Based Algorithms}

\ignore{Verifying linearizable implementations is known to be $EXPSAPCE$- complete \cite{}. 
We present a polynomial verification of finite snapshot implementations, under some natural assumptions on the algorithm that is verified to be correct. Informally, the property that the algorithm is assumed to satisfy, is that the values that the (complete) \scan operations return are a matter of scheduling and they do not depend on the actual values with which the \update operations have been invoked. For example, we consider an execution of a snapshot algorithm, illustrated in Figure \ref{first-execution}.}

In section 4 we show that for a snapshot algorithm $\snap$, if $\snap$ is scheduled-based, then $\snap$ is correct iff all its simple executions are correct. Roughly speaking, an algorithm is scheduled-based if at any of its executions, the values that the (complete) \scan operations return are a matter of scheduling and they do not depend on the actual values with which the \update operations have been invoked. As an example, we consider an execution of a snapshot algorithm, illustrated in Figure \ref{first-execution}.

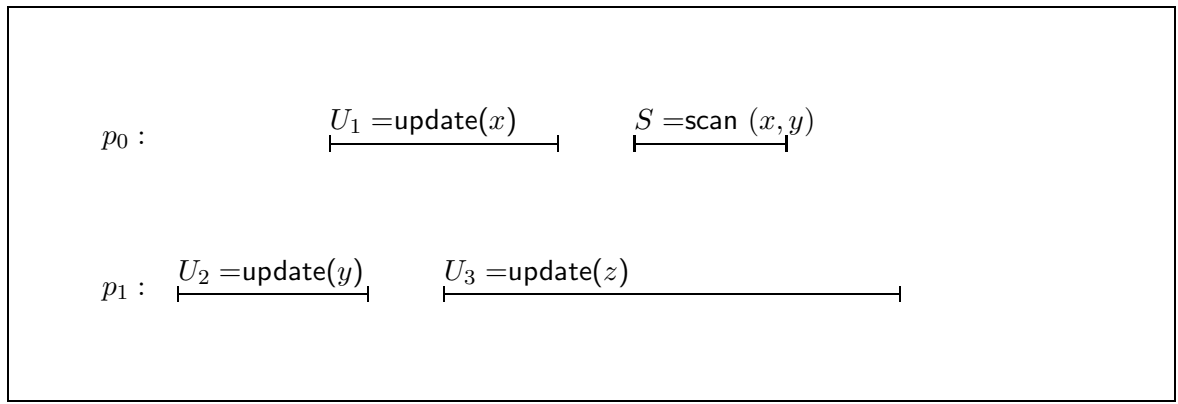
\begin{figure}[H]
\fbox{
\begin{minipage}[t]{15cm}

\setlength{\unitlength}{1cm}
\begin{picture}(5,5)

\put(1,3.3){$p_0:$}

\put(4,3.3){\line(1,0){3}}
\put(4,3.5){$U_1=$\updatex x}

\put(4,3.2){\line(0,1){0.2}}
\put(7,3.2){\line(0,1){0.2}}

\put(8,3.3){\line(1,0){2}}
\put(8,3.5){$S=$\scan$(x,y)$}

\put(8,3.2){\line(0,1){0.2}}
\put(10,3.2){\line(0,1){0.2}}

\put(1,1.3){$p_1:$}

\put(2,1.3){\line(1,0){2.5}}
\put(2,1.5){$U_2=$\updatex y}

\put(2,1.2){\line(0,1){0.2}}
\put(4.5,1.2){\line(0,1){0.2}}

\put(5.5,1.3){\line(1,0){6}}
\put(5.5,1.5){$U_3=$\updatex z}

\put(5.5,1.2){\line(0,1){0.2}}
\put(11.5,1.2){\line(0,1){0.2}}

\end{picture}
\end{minipage}}
\caption{first execution}
\label{first-execution}
\end{figure}  

In this execution $p_0$ executes an {\updatex x} operation $U_1$, and then executes a \scan operation $S$, which returns $(x,y)$. In addition, $p_1$ executes an {\updatex y} operation, $U_2$, and then an {\updatex z} operation, $U_3$. As the \scan operation, $S$, returns $(x,y)$ we have
\begin{equation}
val(S)=(val(U_1),val(U_2)).
\label{equation-for-similar-executions}
\end{equation}

The schedule-based property assumes that equation \ref{equation-for-similar-executions} holds due to the schedule of the execution and the operations that the process execute, but not on the values that the \update operations are invoked with (namely, $x,y$ and $z$). For example, if we let the processes operate in the same order as in the execution presented in Figure \ref{first-execution} and to execute the same operations only with different arguments, we shall get a similar execution as presented in the Figure \ref{second-execution}.

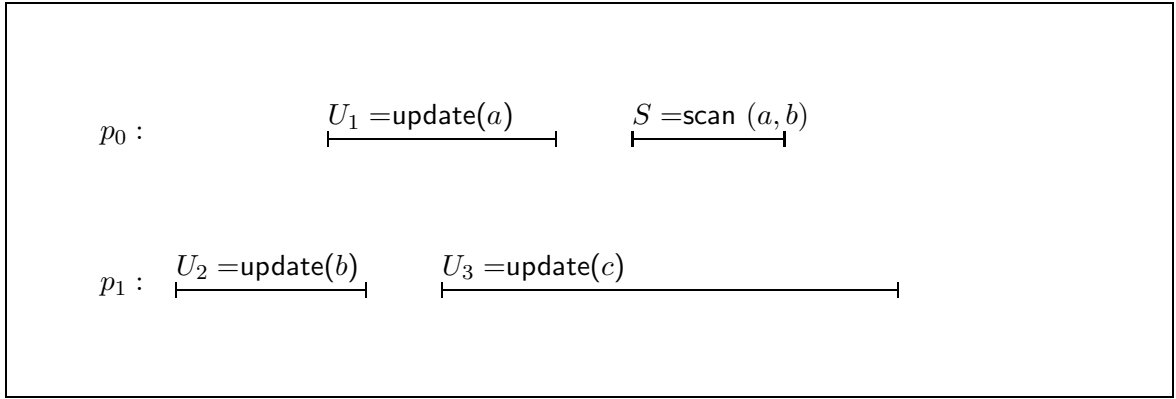
\begin{figure}[H]
\fbox{
\begin{minipage}[t]{15cm}

\setlength{\unitlength}{1cm}
\begin{picture}(5,5)

\put(1,3.3){$p_0:$}

\put(4,3.3){\line(1,0){3}}
\put(4,3.5){$U_1=$\updatex a}

\put(4,3.2){\line(0,1){0.2}}
\put(7,3.2){\line(0,1){0.2}}

\put(8,3.3){\line(1,0){2}}
\put(8,3.5){$S=$\scan$(a,b)$}

\put(8,3.2){\line(0,1){0.2}}
\put(10,3.2){\line(0,1){0.2}}

\put(1,1.3){$p_1:$}

\put(2,1.3){\line(1,0){2.5}}
\put(2,1.5){$U_2=$\updatex b}

\put(2,1.2){\line(0,1){0.2}}
\put(4.5,1.2){\line(0,1){0.2}}

\put(5.5,1.3){\line(1,0){6}}
\put(5.5,1.5){$U_3=$\updatex c}

\put(5.5,1.2){\line(0,1){0.2}}
\put(11.5,1.2){\line(0,1){0.2}}

\end{picture}
\end{minipage}}
\caption{second execution}
\label{second-execution}
\end{figure}  

As the executions in figures \ref{first-execution} and \ref{second-execution} are similar, we expect that equation \ref{equation-for-similar-executions} will hold in both, or in none of this two executions. Of course, this is just a unique example and formally, we require that for any two similar executions and for any \scan event, any equation that resemble equation \ref{equation-for-similar-executions} will hold in both of the executions or in none. For providing the exact definition, we first formulate what we precisely mean when by saying that two executions are similar.

\begin{definition}
\label{similar-executions}
Two execution $\tau$ and $\tau'$ are similar if
\begin{enumerate}
\item Both are of the same length.

\item For each $l$, $\tau(l)$ and $\tau'(l)$ are actions by the same process.

\item For each process $p_i$, the $l$-th $p_i$-operation in $\tau$ is a \scan event iff the $l$-th $p_i$-operation in $\tau'$ is a \scan event.


\end{enumerate}
\end{definition}

Thus, in similar executions the processes operate at the same order and they execute the same procedures. Similar executions only differ by the values that the \update operations are invoked with (and by the values that \scan operations return, due to the difference in the values with which \update were invoked). The property we want to formulate is that in similar executions, the operations that correspond to each other start and end at the same time, and that the \scan operations return the value wrote by corresponding \update events. As an example, in Figure \ref{first-execution} the first $p_1$-\scan operation returns the values of the first $p_0$-\update operation and the first $p_1$-\update operation. As the execution in Figure \ref{second-execution} is similar to this execution, the operations invoked and return at the ``same time", and the \scan event also returns the values of the first \update operations.


\begin{definition}\label{algorithm-assumption}
A snapshot algorithm $\snap$ is said to be a \bi{schedule-based} algorithm (or just sb-algorithm) if for any execution $\tau$, there are functions $\alpha_0^\tau,\dots,\alpha_{n-1}^\tau$
$$\alpha_i^\tau :\text{ complete } \scan\text{ events}\into p_i\text{-}\update\text{ events}$$
such that for any execution $\tau'$ similar to $\tau$:
\begin{enumerate}
\item $(s,t)\in \f N\times (\f N\cup\{\infty\})$ is a $p_i$-\scan ($\update$) event in $\tau$ iff it is a $p_i$-\scan ($\update$) event in $\tau'$.
\item For complete \scan event $E$
$$val_{\tau'}(E)=(val_{\tau'}(\alpha_0^\tau(E)),\dots,val_{\tau'}(\alpha_{n-1}^\tau(E))).$$ 
\end{enumerate}
\end{definition}

Note that if $\tau$ and $\tau'$ are similar and $E=(s,t)$, then by the first requirement $E$ is a scan event in $\tau$ iff it is a \scan event in $\tau'$. Moreover, for each $i<n$ $\alpha^\tau _i(E)$ is a $p_i$-\update event in both the executions $\tau$ and $\tau'$ thus $val_{\tau'}(\alpha^\tau_i(E))$ is well defined. 

Of course, not every snapshot algorithm is an sb-algorithm and in fact, it is possible to transform a correct snapshot algorithm into a correct algorithm which is not schedule based. However, as the snapshot problem deals with synchronization between processes, by the essence of the problem the schedule based property is very natural to assume. Indeed, we are not familiar with any published snapshot algorithm which is not schedule based and by the reasons described here, it seems unnatural to come up with such an algorithm.

\subsection{Finite Implementations}

Alur, et al. proved in \cite{AMP} that it is decidable to determine if an implementation is linearizable with respect to a sequential specification, in case that number of processes is fixed and that all methods can be modeled as finite transition systems. As a consequence, since we do not make assumptions on the size of the set $Vals$ (it can be infinite), Alur et al.  approach \cite{} cannot be applied on snapshot implementations. (unless we assume that $Vals$ is finite and then it can be applied on finite implementations.)

We observe that some snapshot implementations are infinite {\bfseries only because} $Vals$ is infinite set. As an example, consider the classical snapshot algorithms in \cite{snap2}. The algorithm in section 3 is clearly infinite since each process use a field named $seq$ which counts the number of \update events. Now, the algorithm in section 4 is also infinite since each register $r_i$ store a value $data\in Vals$ and possibly $|Vals|=\infty$. But, in the second case, if all \update events are invoked with values taken from some finite range, the registers may store only finitely many different values and we get a finite algorithm. 

For the purpose of our discussion, we say that a snapshot implementation is finite, if it is finite in case that the processes execute only $\scan$,\updatex 0 and \updatex 1 operations. Coming back to our example, the algorithm in section 3 in \cite{snap2} is infinite while the one in section 4 is finite. According to theorem \ref{reduction-to-simple-executions}, it suffice to focus on simple executions, regarding sb-algorithms. We conclude that linearizability of finite snapshot sb-algorithms is decidable.  



\ignore{
Therefore, from a practical point of view we may claim that our verification approach is complete for finite snapshot algorithms, although the verification of algorithms which are not schedule based is beyond the scope of this paper. Furthermore, it is interesting that many ideas we came up with for transferring a correct sb algorithm into a correct non-sb algorithm\footnote{For example: adding a loop to the \update procedure which counts the number of bits of the value that the operation has been invoked with.} were resulted with an algorithm that can be verified using our verification method although it is not schedule based.

\begin{theorem}
Let $\snap$ be a finite sb-snapshot algorithm for $n$ process. Assume that if each \update event is invoked with value from $\{0,1\}$, then only $k$ states of $\snap$ are reachable. Then it is decidable to determine if $\snap$ is correct in $O(k n^4)$ steps. 
\end{theorem}
}

\section{A necessary and sufficient condition for the correctness of a snapshot algorithm}

In this section we present a necessary and sufficient condition for the correctness of an execution $\tau$ of a snapshot algorithm $\snap$. Here $\snap$ is not assumed to be an sb-algorithm. The condition we describe is equivalent to linearizability of any execution of any snapshot implementation $\snap$. Our condition relies upon the existence of $n$ function $\emptyvectoralpha$, 
$$\alphai i$$
that satisfy several properties. In the next definition we define the properties that the functions are required to satisfy.

\begin{definition}
\label{definition-of-correct-functions}
Let $\{\alpha_i : i\in n\}$ be a set of
$n$ functions such that
$$\alphai i.$$
We say that the functions $\alpha_0,\dots,\alpha_{n-1}$ are {\bfseries correct} if the following properties hold 
\begin{itemize}
\item[property 1.] For any complete \scan event $S$ and $i< n$, $S$ returns $val(\alpha_i(S))$ at the $i$-th entry (i.e. $val_i(S)=val(\alpha_i(S))$). 

\item[property 2.] For any complete \scan event $S$ and $i< n$, $\neg(S<\alpha_i(S))$.

\item[property 3.] For any \scan event $S$ and any $i< n$, there is no $p_i$-\update event $U$ so that ${\alpha_i(S)<U<S}$.

\item[property 4.] For any two complete \scan events, $S_1$ and $S_2$, and for any $i< n$, if $S_1<S_2$, then ${\alpha_i(S_1)\leq\alpha_i(S_2)}$.

\item[property 5.] For any complete \scan event $S$ and for any $i,j< n$, there is no $p_i$-\update event $U$ so that ${\alpha_i(S)<U<\alpha_j(S)}$.

\item[property 6.] For two complete \scan events, $S_1$ and $S_2$, we define: $S_1<_\alpha S_2$ if $\exists i<n(\alpha_i(S_1)<\alpha_i(S_2))$. We require that for any two complete \scan events $S_1$ and $S_2$, $\neg\Big((S_1<_\alpha S_2)\andd(S_2<_\alpha S_1)  \Big)$
\end{itemize}

\end{definition}
We show that the properties definition \ref{definition-of-correct-functions} provide a necessary and sufficient condition for the correctness of an execution $\tau$.

\begin{theorem}
\label{main-prop-of-section-1-correctness-condition}

$\tau$ is correct iff there are $n$ correct functions $\alpha_0,\dots,\alpha_{n-1}$, where
$$\alphai i. $$
 
\end{theorem}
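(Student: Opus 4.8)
The plan is to prove both directions of the equivalence, constructing correct functions from a linearization in one direction and a linearization from correct functions in the other.

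\medskip

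\textbf{The forward direction} ($\tau$ correct $\imply$ correct functions exist). Suppose $\tau$ is linearizable, so there is a set $\mathcal E$ with $complete(\tau)\subseteq\mathcal E\subseteq events(\tau)$ and a linear order $\prec$ extending $<$ that satisfies the sequential specification in Figure \ref{snapshot-specification}. For each complete \scan event $S$ and each $i<n$, I would define $\alpha_i(S)$ to be the unique $p_i$-\update event $U_i$ that is maximal with respect to $\prec$ among those satisfying $U_i\prec S$ (this is exactly the $U_i$ from clause 2 of the specification, which exists because the initial \update events precede everything). With this definition, property~1 is immediate from clause~2. For property~2, note $\neg(S\prec\alpha_i(S))$ since $\alpha_i(S)\prec S$, and because $\prec$ extends $<$ we get $\neg(S<\alpha_i(S))$. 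Property~3 follows because any $p_i$-\update event $U$ with $\alpha_i(S)<U<S$ would give $\alpha_i(S)\prec U\prec S$ (as $\prec$ extends $<$), contradicting maximality of $\alpha_i(S)$. Properties~4 and~5 follow by similar arguments using that $\prec$ is a total order extending $<$ and the maximality characterization of $\alpha_i$. Property~6 follows from antisymmetry of $\prec$: if both $S_1<_\alpha S_2$ and $S_2<_\alpha S_1$ held, one could derive $S_1\prec S_2$ and $S_2\prec S_1$ by comparing the positions of the $\alpha_i(S_1),\alpha_i(S_2)$ under $\prec$.

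\medskip

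\textbf{The reverse direction} ($\imply$ correct functions exist $\imply$ $\tau$ correct). This is where the real work lies. Given correct functions $\alpha_0,\dots,\alpha_{n-1}$, I must produce a witnessing set $\mathcal E$ and a linear order $\prec$. The natural choice is to take $\mathcal E$ to consist of all complete events together with exactly those pending \update events appearing in the range of some $\alpha_i$ (these are the pending writes that actually affected a \scan, so they must be linearized). The core construction is to define a partial order on $\mathcal E$ that refines $<$ and then extend it to a linear order; the key relation to incorporate is the $<_\alpha$ relation from property~6, which by that property is acyclic-compatible enough to coexist with $<$. I would define a relation $R$ as the transitive closure of $<$ together with the constraints ``$\alpha_i(S)\prec S$'' (each write feeding a \scan precedes it) and ``$S\prec U$ whenever $U$ is a $p_i$-\update with $\alpha_i(S)<U$'' (forcing each \scan before any later write that supersedes its read value). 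The main obstacle is proving that $R$ is acyclic, so that a compatible linear extension $\prec$ exists: this requires combining properties~2--6 to rule out every possible cycle, and I expect property~6 (the $<_\alpha$ consistency) together with property~5 to be precisely what blocks cycles involving two different \scan events. Once acyclicity is established, any topological/linear extension $\prec$ yields the desired total order.

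\medskip

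\textbf{Verifying the specification holds under $\prec$.} After fixing $\prec$, I must check that clause~2 of Figure \ref{snapshot-specification} holds, i.e.\ that for every complete \scan event $S$, the $\prec$-maximal $p_i$-\update event below $S$ is precisely $\alpha_i(S)$ and hence $val_i(S)=val(\alpha_i(S))$ by property~1. That $\alpha_i(S)\prec S$ holds by construction (property~2 guarantees consistency with $<$). That no $p_i$-\update event $U$ satisfies $\alpha_i(S)\prec U\prec S$ is where the construction of $R$ pays off: property~3 handles the case where $U$ is $<$-between, and the forcing constraints ``$S\prec U$ for superseding writes'' handle the concurrent case, so no such $U$ can be inserted strictly between $\alpha_i(S)$ and $S$ by the linear extension. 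I expect the bookkeeping here to be routine once acyclicity is in hand, so the decisive step remains establishing that $R$ is acyclic.
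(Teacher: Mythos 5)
Your forward direction is essentially the paper's: define $\alpha_i(S)$ as the $\prec$-maximal $p_i$-\update event preceding $S$ in a given linearization, then check properties 1--6 against that maximality; this part is fine. Your reverse direction also sets up exactly the paper's construction: your relation $R$ has the same transitive closure as the paper's $<\cup\,\lhd$ (where $U\lhd S$ iff $U\leq\alpha_i(S)$, and $S\lhd U$ otherwise), and your final verification that any linear extension of it satisfies the sequential specification matches the paper's closing lemma. (Your smaller choice of $\mathcal E$ --- complete events plus only those pending \update events in the range of some $\alpha_i$, rather than all \update events --- is a harmless variation.)

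The genuine gap is that you never prove the statement you yourself call ``the decisive step'': acyclicity of $R$. You only record the expectation that property 6 together with property 5 will block cycles, and that expectation is both incomplete and missing the key idea. The paper's acyclicity argument occupies five lemmas. First comes a collapsing lemma (lemma \ref{less-then-4}): any chain $X_1\lhd X_2\lhd X_3\lhd X_4$ yields $X_1\lhd X_4$; it is proved from property 6, and it is the crucial trick because it reduces arbitrarily long $\lhd$-chains to chains of length at most three, after which pure-$\lhd$ cycles are ruled out by a parity argument together with the definition of $\lhd$ (lemma \ref{lhd-no-cycles}). Second, cycles mixing $<$-edges with $\lhd$-edges are not touched by property 6 at all: the paper needs $X\lhd Y\Rightarrow\neg(Y<X)$ (lemma \ref{2-elements}, which uses properties 2 and 3) and $X\lhd Y\lhd Z\Rightarrow\neg(Z<X)$ (lemma \ref{3-elements}, which uses properties 4 and 5), combined via the collapsing lemma into lemma \ref{n-elements}; and even then the final minimal-cycle argument must descend below the high-level precedence relation to individual low-level actions (from $\neg(X_m<X_2)$ it extracts low-level events $e\in X_2$, $e'\in X_m$ with $e<e'$) in order to reach a contradiction, because two high-level events that are not $<$-ordered may simply overlap. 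None of this machinery --- the collapsing lemma, the separate roles of properties 2, 3 and 4 in the mixed case, or the low-level-event reasoning --- appears in your sketch, so as written the heart of the ``if'' direction is missing.
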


Before we provide a formal proof for this proposition, we explain the idea behind theorem \ref{main-prop-of-section-1-correctness-condition} and the properties of definition \ref{definition-of-correct-functions}. If $\tau$ is an execution of a snapshot algorithm $\snap$, we want to check if $<$ (defines on the high level events) can be extended into a total order $\prec$ that satisfies the sequential specification. The idea is to relate for each \scan event $S$ and a process $p_i$, some $p_i$-\update event $U_i$ that will be the maximal $p_i$-\update\ that precedes $S$ in $\prec$. This idea defines a function 
$$\alpha_i: \scan\text{ events }\into p_i\text{-\update\ events}$$
by setting $\alpha_i(S)=U_i$. We shall prove that if these functions $\alpha_0,\dots,\alpha_{n-1}$ satisfy the properties of definition \ref{definition-of-correct-functions} (namely, they are correct), then we can extend $<$ into a linear ordering $\prec$ so that:
\begin{enumerate}
\item  for each \scan\ event $S$ and $i<n$, $\alpha_i(S)$ is the maximal $p_i$-\update\ event that precedes $S$ in $\prec$.
\item $\prec$ satisfies the sequential specification (note that this easily stems from the previous claim and from property 1 in definition \ref{definition-of-correct-functions}).  
\end{enumerate}

Now we turn to prove theorem \ref{main-prop-of-section-1-correctness-condition}. The easy direction of our proposition is the ``only if" direction, namely that if $\tau$ is a correct execution, then there are $n$ correct functions $\emptyvectoralpha$, 
$$\alphai i.$$ 
Roughly speaking, for proving this direction we show that the negation of each property in definition \ref{definition-of-correct-functions} indicates a ``bug" in the execution that prevents Linearizability. 

We fix a correct execution $\tau$, a set
$$complete(\tau)\subseteq \class E\subseteq events(\tau)$$ 
and we assume that $(\class E,\prec)$ is a linearization of $\tau$ (i.e. $\prec$ extends $<$ on $\class E$ and satisfies the sequential specification). For any complete \scan event $S$ and $i<n$, we define $\alpha_i(S)$ to be the maximal $p_i$-\update event that precedes $S$ in $\prec$. We claim that these functions satisfy the properties of definition \ref{definition-of-correct-functions}. As an example, we shall prove that property 6 hold, and we leave the straightforward proof of the other properties to the reader.

\begin{proof}
Let $S_1$ and $S_2$ be two complete \scan events. For proving that property 6 hold, assume for a contradiction that $S_1<_\alpha S_2$ and $S_2<_\alpha S_1$. Thus, for some $i,j<n$, $\alpha_i(S_1)<\alpha_i(S_2)$ and $\alpha_j(S_2)<\alpha_j(S_1)$. Assume w.l.o.g. that $S_1\prec S_2$. As $\prec$ extends $<$, we get 
$$\alpha_j(S_2)\prec \alpha_j(S_1).$$
Furthermore, by definition of $\alpha_j$ we have
$$\alpha_j(S_1)\prec S_1.$$

By combining these two observations we conclude
$$\alpha_j(S_2)\prec\alpha_j(S_1)\prec S_1\prec S_2$$
in contradiction to the definition of $\alpha_j$, namely that $\alpha_j(S_2)$ is the maximal $p_j$-\update event that precedes $S_2$ in $\prec$.
\end{proof}

For proving the second direction of theorem \ref{main-prop-of-section-1-correctness-condition} we fix an execution $\tau$ and we argue that if $\alpha_0,\dots,\alpha_{n-1}$ are correct functions, then $\tau$ is correct. We define 
$$\class E=complete (\tau)\cup \{\update \text{ events}\}.$$ 

Clearly, $complete(\tau)\subseteq \class E\subseteq events(\tau)$ and our strategy is to use the functions $\emptyvectoralpha$ to construct a linear ordering $\prec$ on the set of events $\class E$, that extends $<$. Our proof relies on the idea mentioned earlier, namely that $\prec$ is correct if for any complete \scan event $S$ and $i< n$, the maximal $p_i$-\update event that precedes $U$ in $\prec$ is $\alpha_i(S)$. Hence, for a \scan event $S$ and a $p_i$-\update event $U$, we should linearize $U$ before $S$ if $U\leq \alpha_i(S)$, and we should set $S\prec U$ otherwise. However, it is not clear why this approach yields a linear or even a partial ordering. In order to overcome this problem we prove that by extending $<$ in the way described above, we get an acyclic relation (and hence, this relation can be extended to a linear ordering). For the rest of the proof, we speak only about events in $\class E$, the reader may observe that all the \scan events in $\class E$ are complete thus all the \scan events we deal with from this point, are complete.

\begin{definition}
For a \scan event $S$ and a $p_i$-\update event $U$, we define
$U\lhd S \text{ if $U\leq \alpha_i(S)$}$ and $S\lhd U\text{ otherwise.}$  
\end{definition}

As we have said, we shall prove that there are no cycles in $<\cup \, \lhd$. First, we prove this fact only for $\lhd$.

\begin{lemma}\label{less-then-4}
If $X_1\lhd X_2\lhd X_3\lhd X_4$, then $X_1\lhd X_4$.
\end{lemma}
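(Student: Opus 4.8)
The plan is to exploit the fact that the relation $\lhd$ is defined \emph{only} between a complete \scan event and an \update event, and never between two events of the same kind. Consequently, in any chain $X_1\lhd X_2\lhd X_3\lhd X_4$ the types must strictly alternate, so $X_1$ and $X_4$ are always of opposite kinds and the target statement $X_1\lhd X_4$ is a well-formed claim about a \scan/\update pair. This splits the argument into two symmetric cases, according to whether $X_1$ is a \scan or an \update event. Throughout I would use that the \update events of a single process are linearly ordered by $<$, so that for a $p_i$-\update event $U$ the negation of $U\leq\alpha_i(S)$ is precisely $\alpha_i(S)<U$; this is what makes the two halves of the definition of $\lhd$ genuine opposites.

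In the first case write $X_1=S$, $X_2=U$, $X_3=S'$, $X_4=U'$, where $S,S'$ are \scan events, $U$ is a $p_i$-\update event and $U'$ is a $p_j$-\update event. Unwinding the definition of $\lhd$, the three hypotheses become $\alpha_i(S)<U$, $U\leq\alpha_i(S')$ and $\alpha_j(S')<U'$. The first two combine to $\alpha_i(S)<\alpha_i(S')$, i.e. $S<_\alpha S'$. Now property 6 forbids $S'<_\alpha S$, so there is no index $k$ with $\alpha_k(S')<\alpha_k(S)$; in particular $\alpha_j(S)\leq\alpha_j(S')$. Together with $\alpha_j(S')<U'$ this yields $\alpha_j(S)<U'$, which is exactly $S\lhd U'$, i.e. $X_1\lhd X_4$.

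The second case, $X_1=U$ (a $p_i$-\update event), $X_2=S$, $X_3=U'$ (a $p_j$-\update event), $X_4=S'$, is entirely parallel: the hypotheses read $U\leq\alpha_i(S)$, $\alpha_j(S)<U'$ and $U'\leq\alpha_j(S')$, and the middle pair again gives $\alpha_j(S)<\alpha_j(S')$, hence $S<_\alpha S'$. Property 6 then forces $\alpha_i(S)\leq\alpha_i(S')$, and chaining with $U\leq\alpha_i(S)$ produces $U\leq\alpha_i(S')$, i.e. $U\lhd S'$, which is $X_1\lhd X_4$.

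The one step that genuinely carries the lemma — and the point I expect to be the main obstacle — is the appeal to property 6. The middle two relations in the chain always pin down an $\alpha$-comparison $S<_\alpha S'$ between the two \scan events occurring in the chain, but this comparison lives in a coordinate ($i$) that need not be the one ($j$) governing the outer \update event. Property 6 is exactly what lets me transfer the ordering from one coordinate to the other, turning $S<_\alpha S'$ into $\alpha_j(S)\leq\alpha_j(S')$ (respectively $\alpha_i(S)\leq\alpha_i(S')$); without it the lemma would simply fail. Everything else is bookkeeping, provided one keeps the direction of each inequality straight and remembers that same-process \update events are totally ordered.
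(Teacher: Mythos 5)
Your proof is correct and follows essentially the same route as the paper's: the same case split on the type of $X_1$, the same unwinding of the middle two relations to get $S<_\alpha S'$, and the same appeal to property 6 (plus linearity of same-process \update events under $<$) to transfer the ordering to the outer coordinate. The paper's Case 2 even contains a typo ($\alpha_j(S_1)<U_2\leq\alpha_j(S_1)$ where the right-hand side should be $\alpha_j(S_2)$) that your version states correctly.
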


\begin{proof}
There are two possible cases
\begin{itemize}
\item[Case 1.] $X_1$ is a \scan event. We note that if $X\lhd Y$ then one of the events $X,Y$ is a \scan event and the other is an \update event.  Thus, if $X_1$ is a \scan event, then our sequence is of the form
$$S_1\lhd U_1\lhd S_2\lhd U_2$$
where $S_1$ and $S_2$ are \scan events, $U_1$ is a $p_i$-\update event for some $i< n$ and $U_2$ is a $p_j$-{update} event for some $j< n$.
Since $S_1\lhd U_1$, by definition of $\lhd$, $\alpha_i(S_1)<U_1$. Since $U_1\lhd S_2$, we get $U_1\leq \alpha_i(S_2)$. Therefore, 
$$\alpha_i(S_1)<U_1\leq\alpha_i(S_2).$$
As a result $\alpha_i(S_1)<\alpha_i(S_2)$ thus
$$S_1<_\alpha S_2.$$

Now, $S_2\lhd U_2$ indicates that $\alpha_j(S_2)< U_2$. Since $S_1<_\alpha S_2$, by property 6, $\neg(S_2<_\alpha S_1)$ and hence $\alpha_j(S_1)\leq \alpha_j(S_2)$. Therefore, $\alpha_j(S_1)< U_2$ either, and hence $S_1\lhd U_2$ as required.

\item[Case 2.] $X_1$ is $p_i$-\update event for some $i< n$. Thus, our sequence is of the form
$$U_1\lhd S_1\lhd U_2\lhd S_2$$
where $U_1$ is a $p_i$-\update event, $S_1$ and $S_2$ are \scan events and $U_2$ is a $p_j$-\update event for some $j< n$.

$S_1\lhd U_2\lhd S_2$ proves that $\alpha_j(S_1)<U_2\leq \alpha_j(S_1)$. Hence $\alpha_j(S_1)<\alpha_j(S_2)$ and $S_1<_\alpha S_2$ holds. By property 6, $\neg(S_2<_\alpha S_1)$ thus $\alpha_i(S_1)\leq \alpha_i(S_2)$. $U_1\lhd S_1$ indicates that $U_1\leq \alpha_i(S_1)$. Therefore, from $\alpha_i(S_1)\leq\alpha_i(S_2)$ we get that $U_1\leq \alpha_i(S_2)$ as well, and hence $U_1\lhd S_2$ holds as required.

\end{itemize}
\end{proof}

A cycle of length $m>1$ in a binary relation $R$ is a sequence of elements $(X_1,\dots,X_m)$ so that $(X_i,X_{i+1})\in R$ for each $0\leq i<m$ and $X_1=X_m$.

\begin{lemma}\label{lhd-no-cycles}
There are no cycles in $\lhd$.
\end{lemma}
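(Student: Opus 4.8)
The plan is to argue by induction on the length of a putative cycle, using Lemma \ref{less-then-4} as a contraction tool. The first observation I would record is that $\lhd$ is \emph{bipartite}: by its very definition, $U \lhd S$ or $S \lhd U$ can hold only between a \scan event $S$ and an \update event $U$, and never between two events of the same kind. Hence along any chain $X_1 \lhd X_2 \lhd \dots$ the type of the events strictly alternates between \scan and \update. In particular, in a closed chain the first and last events have the same type, which forces the number of edges of any cycle in $\lhd$ to be even. I would therefore set up the proof as an induction on $k$, where a cycle is assumed to have $2k$ edges.

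For the base case $k=1$, a cycle consists of two edges and, by the alternation just noted, must have the shape $S \lhd U \lhd S$ for a single \scan event $S$ and a single $p_i$-\update event $U$ (or the symmetric shape $U \lhd S \lhd U$). But $U \lhd S$ means $U \leq \alpha_i(S)$, while $S \lhd U$ is defined to hold precisely when $U \leq \alpha_i(S)$ fails; the two cannot hold simultaneously, so no $2$-edge cycle exists.

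For the inductive step, suppose a cycle $X_1 \lhd X_2 \lhd \dots \lhd X_{2k} \lhd X_{2k+1} = X_1$ with $2k \geq 4$ edges is given. Since $2k+1 \geq 5$, the four elements $X_1, X_2, X_3, X_4$ form a genuine subchain $X_1 \lhd X_2 \lhd X_3 \lhd X_4$, so Lemma \ref{less-then-4} yields $X_1 \lhd X_4$. Replacing the three edges $X_1 \lhd X_2 \lhd X_3 \lhd X_4$ by the single edge $X_1 \lhd X_4$ produces the shorter closed chain $X_1 \lhd X_4 \lhd X_5 \lhd \dots \lhd X_{2k+1} = X_1$, which is again a cycle and now has $2k-2$ edges. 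Iterating this contraction drives the edge count down through the even numbers to $2$, where the base case supplies a contradiction. I expect the only delicate points to be bookkeeping rather than conceptual: confirming that the contraction genuinely preserves the cycle (the new first edge is supplied by Lemma \ref{less-then-4} and all remaining edges are inherited unchanged), and checking that the bipartite/parity observation is airtight, so that the induction really terminates at a $2$-edge cycle rather than at an odd residue that the lemma could not dispatch.
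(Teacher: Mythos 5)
Your proof is correct and takes essentially the same route as the paper: the paper likewise combines the bipartite/parity observation with Lemma \ref{less-then-4} to force any (minimal) cycle down to a two-edge cycle $S \lhd U \lhd S$ (or $U \lhd S \lhd U$), and then derives the same contradiction from the definition of $\lhd$. The only difference is packaging --- you run an explicit induction that contracts the cycle step by step, while the paper invokes minimality of the cycle once to bound its length and then uses parity to pin it at two edges.
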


\begin{proof}
Assume for a contradiction that there are cycles in $\lhd$ and consider a cycle of minimal length $(X_1,\dots,X_m)$ where $m>1$. Since $m$ is minimal, by lemma \ref{less-then-4} we conclude that $m<5$. If $m$ is an even integer, then  $X_1$ is a \scan event and $X_m$ is an {update} event, or $X_1$ is an \update event and $X_m$ is a {scan} event. Thus, if $m$ is even then $X_1\neq X_m$. The corollary is that $2\leq m\leq 4$ and $m$ is odd thus $m=3$. Therefore, we get
$$X_1\lhd X_2\lhd X_3\text{ and } X_1=X_3.$$

Now, $X_1$ can be a \scan event or an \update event. First, assume that $X_1$ is a \scan event. Thus, our cycle is of the form
$$S\lhd U\lhd S$$
where $S$ is a \scan event and $U$ is a $p_i$-\update event for some $i< n$. $S\lhd U$ implies that $\alpha_i(S)<U$ while $U\lhd S$ indicates the opposite. Thus, a contradiction has been reached.

It is left to consider the case that $X_1$ is a $p_i$-\update event for some $i< n$. Thus, the sequence is of the form
$$U\lhd S\lhd U$$
where $U$ is a $p_i$-\update event and $S$ is a \scan event. From $U\lhd S$ we conclude that $U\leq \alpha_i(S)$, and from $S\lhd U$ we conclude the opposite. Thus, as in the previous case, This case leads to a contradiction as well.
\end{proof}

So far, we have proved that there are no cycles in $\lhd$. For proving the same for $<\cup\,\lhd$ we need few more lemmas.

\begin{lemma}\label{2-elements}
If $X\lhd Y$ then $\neg(Y<X)$
\end{lemma}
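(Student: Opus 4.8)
The plan is to unfold the definition of $\lhd$ into its two possible shapes and derive a contradiction in each from the precedence order $<$, using properties 2 and 3 of definition \ref{definition-of-correct-functions}. Recall that $X\lhd Y$ can hold only when one of $X,Y$ is a \scan event and the other is an \update event, so there are exactly two cases to examine, and in each I would assume the forbidden $Y<X$ and reach a contradiction.

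First I would treat the case $X=U\lhd S=Y$, where $U$ is a $p_i$-\update event and, by definition of $\lhd$, $U\leq\alpha_i(S)$. Assuming toward a contradiction that $S<U$, I combine this with $U\leq\alpha_i(S)$. Since $<$ is transitive on high level events (because every event $(s,t)$ satisfies $s\leq t$, so a chain $t_1<s_2\leq t_2<s_3$ composes correctly), both subcases $U=\alpha_i(S)$ and $U<\alpha_i(S)$ yield $S<\alpha_i(S)$. This contradicts property 2, which forbids $S<\alpha_i(S)$. Hence $\neg(S<U)$, that is, $\neg(Y<X)$.

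Next I would treat the case $X=S\lhd U=Y$, where $U$ is a $p_i$-\update event and, by definition of $\lhd$, $\neg(U\leq\alpha_i(S))$. The key observation here is that $\alpha_i(S)$ and $U$ are both $p_i$-\update events, and all events of a single process are linearly ordered by $<$; therefore the failure of $U\leq\alpha_i(S)$ is equivalent to the strict inequality $\alpha_i(S)<U$. Assuming toward a contradiction that $U<S$, I then obtain $\alpha_i(S)<U<S$ with $U$ a $p_i$-\update event, which directly contradicts property 3. Hence $\neg(U<S)$, that is, $\neg(Y<X)$.

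The argument is essentially a two-line case analysis, so I do not expect a serious obstacle; the only point requiring care is the second case, where one must explicitly invoke the per-process linearity of $<$ to convert the negation of $U\leq\alpha_i(S)$ into the strict relation $\alpha_i(S)<U$ before property 3 becomes applicable. The first case likewise relies on the (routine) transitivity of the extended precedence relation on high level events, which is worth stating once and reusing.
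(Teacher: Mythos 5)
Your proof is correct and follows essentially the same route as the paper's: the same two-case split on which of $X,Y$ is the \scan event, with property 2 handling the case $U\lhd S$ and property 3 handling the case $S\lhd U$. The only difference is that you spell out two points the paper leaves implicit (the transitivity of $<$ across high level events, and the per-process linearity needed to turn $\neg(U\leq\alpha_i(S))$ into $\alpha_i(S)<U$), which is fine.
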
  

\begin{proof}
There are two possible cases
\begin{itemize}
\item[Case 1.] $Y$ is a \scan event. Thus, $X$ is a $p_i$-\update event for some $i< n$ and $X\leq \alpha_i(Y)$. By property 2, $\neg(Y<\alpha_i(Y))$ and hence $Y<X$ is impossible.

\item[Case 2.] $Y$ is a $p_i$-\update event for some $i< n$. Thus, $X$ is a \scan event and $\alpha_i(X)<Y$. If we assume that $Y<X$ we get $\alpha_i(X)<Y<X$ in contradiction to property 3, and hence $\neg(Y<X)$. 

\end{itemize}
\end{proof}

\begin{lemma}\label{3-elements}
If $X\lhd Y\lhd Z$, then $\neg(Z<X)$
\end{lemma}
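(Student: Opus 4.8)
The plan is to exploit the structural fact, already used in the proof of Lemma~\ref{less-then-4}, that $\lhd$ only relates a \scan event to an \update event. Consequently, along the chain $X\lhd Y\lhd Z$ the event types must alternate, so $X$ and $Z$ are of the same kind. This splits the argument into exactly two cases according to whether $X$ is a \scan event or an \update event. In each case I will simply unfold the definition of $\lhd$, assume $Z<X$ toward a contradiction, and invoke one of the correctness properties of Definition~\ref{definition-of-correct-functions}.

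First, suppose $X$ is a \scan event. Then the chain has the form $S_1\lhd U\lhd S_2$, where $S_1,S_2$ are (complete) \scan events and $U$ is a $p_i$-\update event for some $i<n$. Unfolding, $S_1\lhd U$ gives $\alpha_i(S_1)<U$ and $U\lhd S_2$ gives $U\leq\alpha_i(S_2)$, whence $\alpha_i(S_1)<\alpha_i(S_2)$. Here $Z<X$ reads $S_2<S_1$; assuming it, property~4 yields $\alpha_i(S_2)\leq\alpha_i(S_1)$, which is incompatible with $\alpha_i(S_1)<\alpha_i(S_2)$. This settles the first case using only the monotonicity property.

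Second, suppose $X$ is a $p_i$-\update event. Then the chain has the form $U_1\lhd S\lhd U_2$, where $U_1$ is a $p_i$-\update event, $S$ is a \scan event, and $U_2$ is a $p_j$-\update event (possibly $i\neq j$). Unfolding, $U_1\lhd S$ gives $U_1\leq\alpha_i(S)$ and $S\lhd U_2$ gives $\alpha_j(S)<U_2$. Now $Z<X$ reads $U_2<U_1$; assuming it, I chain the inequalities to obtain $\alpha_j(S)<U_2<U_1\leq\alpha_i(S)$, so that $U_2$ is a $p_j$-\update event lying strictly between $\alpha_j(S)$ and $\alpha_i(S)$. This directly contradicts property~5, read with the indices matched as ``there is no $p_j$-\update event between $\alpha_j(S)$ and $\alpha_i(S)$''. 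Note that this single contradiction covers both $i=j$ and $i\neq j$, since when $i=j$ the chain already collapses to $\alpha_j(S)<\alpha_j(S)$.

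I expect the only point requiring care to be the index bookkeeping in the second case: one must keep track of which process owns $U_2$ and apply property~5 with the indices matched correctly, recognising the intermediate update event $U_2$ as precisely the event whose existence property~5 forbids. Everything else reduces to unfolding the definition of $\lhd$ and transitivity of $<$ and $\leq$ on events, so the first case is a routine application of property~4 and the second of property~5.
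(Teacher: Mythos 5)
Your proof is correct and follows essentially the same route as the paper's: the same case split on whether $X$ is a \scan or an \update event, with Case 1 resolved by property~4 and Case 2 by chaining $\alpha_j(S)<U_2<U_1\leq\alpha_i(S)$ and invoking property~5 (with the indices swapped, exactly as the paper does implicitly). Your extra remarks on the alternation of event types and on the degenerate case $i=j$ are fine but not substantively different from the paper's argument.
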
  

\begin{proof}
As before, $X$ is either a \scan event or an \update event.
\begin{itemize}
\item[Case 1.] $X$ is a \scan event. Thus, $Y$ is a $p_i$-\update event for some $i< n$ and $Z$ is a \scan event. By definition of $\lhd$, $\alpha_i(X)<Y\leq\alpha_i(Z)$ and hence 
$$\alpha_i(X)<\alpha_i(Z).$$
The assumption $Z<X$ contradicts property 4 thus $\neg(Z<X)$. 

\item[Case 2.] $X$ is a $p_i$-\update event for some $i< n$. In this case $Y$ is a \scan event and $Z$ is a $p_j$-\update event for some $j< n$. By definition of $\lhd$, $X\leq \alpha_i(Y)$ and $\alpha_j(Y)<Z$. Assume for a contradiction that $Z<X$. So, we get $\alpha_j(Y)<Z<X\leq\alpha_i(Y)$ and in particular
$$\alpha_j(Y)<Z<\alpha_i(Y).$$
Since $Z$ is a $p_j$-\update event, our conclusion contradicts property 5, and hence $\neg(Z<X)$ as required.

\end{itemize}
\end{proof}

\begin{lemma}\label{n-elements}
If $X_1\lhd X_2\lhd \dots\lhd X_m$, then $\neg(X_m<X_1)$.
\end{lemma}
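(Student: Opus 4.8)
The plan is to prove Lemma \ref{n-elements} by strong induction on the length $m$ of the chain, using Lemma \ref{less-then-4} as the engine for shortening chains and the already-established Lemmas \ref{2-elements} and \ref{3-elements} as base cases. The two preceding lemmas are exactly the statements for $m=2$ and $m=3$, so the short chains are settled and the only work is to reduce a long chain to a shorter one with the same endpoints.

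For the inductive step I would take $m\geq 4$ and assume the conclusion for every chain of length strictly smaller than $m$. Given a chain
$$X_1\lhd X_2\lhd X_3\lhd X_4\lhd\dots\lhd X_m,$$
I would apply Lemma \ref{less-then-4} to its initial segment $X_1\lhd X_2\lhd X_3\lhd X_4$ to obtain $X_1\lhd X_4$. Splicing this back in yields
$$X_1\lhd X_4\lhd X_5\lhd\dots\lhd X_m,$$
a chain of length $m-2<m$ whose first and last elements are still $X_1$ and $X_m$. The induction hypothesis then gives $\neg(X_m<X_1)$, which is exactly what is required.

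The one point that deserves a word of care is the parity bookkeeping: each contraction removes two elements, so a chain of even length eventually descends to the base case $m=2$ and a chain of odd length descends to $m=3$. This is why both Lemma \ref{2-elements} and Lemma \ref{3-elements} are genuinely needed as base cases rather than just one of them. The contraction itself is legitimate because the inductive step assumes $m\geq 4$, so the initial segment to which Lemma \ref{less-then-4} is applied really does contain four elements.

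I do not anticipate any serious obstacle here: all the substantive content—the case analysis on whether the endpoints are \scan or \update events, and the appeal to property 6—has already been absorbed into Lemma \ref{less-then-4}. The present lemma is therefore a clean, almost purely combinatorial induction, and the whole argument amounts to little more than iterating the four-element transitivity statement until the chain is short enough to invoke the base cases.
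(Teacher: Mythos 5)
Your proof is correct and is essentially the paper's own argument: the paper likewise contracts the chain by repeated application of Lemma \ref{less-then-4} until its length drops below four (keeping the endpoints $X_1$ and $X_m$ fixed), then invokes Lemmas \ref{2-elements} and \ref{3-elements}; your strong induction is just a cleaner packaging of that iteration. The only nuance is the trivial case $m=1$, which the paper disposes of separately via $\neg(X_1<X_1)$ and which your base cases $m=2,3$ omit.
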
  

\begin{proof}
Consider a sequence of the form $X_1\lhd X_2\lhd \dots\lhd X_m$. If $m=1$, then the lemma clearly holds since $\neg(X_1<X_1)$. In addition, if $m\geq 2$ by several invocation of lemma \ref{less-then-4} (possibly none) we can construct a sequence $Y_1\lhd\dots\lhd Y_k$ so that 
\begin{itemize}
\item $X_1=Y_1$.
\item $X_m=Y_k$
\item $0<k<4$.
\end{itemize}
if $k=1$ we are done by the previous argument, and if $k\in\{2,3\}$, by lemmas \ref{2-elements} and \ref{3-elements} we conclude that $\neg(Y_k<Y_1)$ and the lemma follows.
\end{proof}

Now we can prove that $<\cup\,\lhd$ can be extended into linear ordering.

\begin{lemma}
There are no cycles in $<\cup\,\lhd$. 
\end{lemma}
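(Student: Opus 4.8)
The plan is to argue by contradiction from a cycle of minimal length, first normalizing it with the transitivity of $<$ and Lemma~\ref{less-then-4}, and then tracking a monotone quantity — the tuple of $\alpha$-values of the scan events it visits — around the loop. First I would fix a cycle $(X_1,\dots,X_m)$ of minimal length. Since $<$ is a strict partial order on the high level events (transitive and irreflexive, as it is induced by the linear order on low level events), $<$ alone is acyclic, and by Lemma~\ref{lhd-no-cycles} so is $\lhd$; hence the minimal cycle uses both relations. By minimality it can contain neither two consecutive $<$-edges (transitivity of $<$ would shorten it) nor three consecutive $\lhd$-edges (Lemma~\ref{less-then-4} would shorten it), so it reads as an alternation of single $<$-edges and $\lhd$-runs of length at most $2$.

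If the cycle had exactly one $<$-edge it would read $X_1\lhd\dots\lhd X_m< X_1$ after a rotation, contradicting Lemma~\ref{n-elements} at once; the real work is to rule out several $<$-edges. Because the cycle carries a $\lhd$-edge it visits at least one \scan event, so I would list its scan events in cyclic order $S_1,\dots,S_p$ (with $S_{p+1}=S_1$). Between two consecutive scans the cycle passes only through $p_i$-\update events, and the normalization above forces at most two interior updates per such segment (the interior update-to-update edges are all $<$, so no two are consecutive, and the edges touching the scans must then be $\lhd$). Thus each segment has one of the shapes $S_a< S_{a+1}$, $S_a\lhd U\lhd S_{a+1}$, $S_a\lhd U< S_{a+1}$, $S_a< U\lhd S_{a+1}$, or $S_a\lhd U_1< U_2\lhd S_{a+1}$.

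I would then order scans by the product order on the tuples $(\alpha_0(S),\dots,\alpha_{n-1}(S))$, each coordinate living in the total per-process order of that process's \update events; this is a genuine partial order. The claim is that every segment weakly increases this tuple coordinatewise, strictly whenever the segment carries a $\lhd$-edge. A direct edge $S_a<S_{a+1}$ gives this by property 4. For the other shapes, the first edge forces $\alpha_i(S_a)<U$ for the first update $U$ (a $p_i$-\update event) — directly when that edge is $\lhd$, and via property 2 with the per-process total order when it is $<$ — while the last edge forces the last update $\le\alpha_j(S_{a+1})$, directly when $\lhd$ and via property 3 when $<$. This already yields one strictly increasing coordinate, except in the two-update shape $S_a\lhd U_1< U_2\lhd S_{a+1}$, where property 5 is needed to get $\alpha_j(S_a)<U_2$ (otherwise $\alpha_i(S_a)<U_1<\alpha_j(S_a)$ with $U_1$ a $p_i$-\update event violates property 5). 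A strict increase in one coordinate gives $S_a<_\alpha S_{a+1}$, which property 6 promotes to a coordinatewise $\le$. Composing around the cycle, the tuple would be non-decreasing in the product order yet strictly larger in some coordinate after a full loop, which is impossible. I expect the main obstacle to be precisely this segment-by-segment bookkeeping: reconciling the real-time order $<$ with the $\alpha$-values across an update, where concurrency between an update and a scan defeats the naive argument and forces the use of properties 2, 3, and especially property 5 in the two-update segment.
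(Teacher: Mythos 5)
Your proof is correct, and its core mechanism is genuinely different from the paper's. You share the opening moves: take a cycle of minimal length, normalize it using transitivity of $<$ and Lemma~\ref{less-then-4}, and dispose of a cycle with a single $<$-edge by Lemma~\ref{n-elements}. But for a cycle with two or more $<$-edges the paper never introduces scan-to-scan segments or a monotone quantity: it takes a maximal $\lhd$-run flanked by $<$-edges, $X_1<X_2\lhd\dots\lhd X_k<X_{k+1}$, applies Lemma~\ref{n-elements} to get $\neg(X_k<X_2)$, and then drops down to low-level events (the first action of $X_2$ occurs no later than the last action of $X_k$) to conclude $X_1<X_{k+1}$ in real time; this contracts the whole run to a single $<$-edge and contradicts minimality of the cycle. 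Your argument instead stays entirely at the level of the functions $\alpha_0,\dots,\alpha_{n-1}$: you cut the cycle at its scan events, check the five possible segment shapes, and show that the tuple $(\alpha_0(S),\dots,\alpha_{n-1}(S))$ is coordinatewise non-decreasing along every segment (property 4 for a bare $<$-edge; otherwise properties 2, 3, 5 and the definition of $\lhd$ produce one strictly increasing coordinate, which property 6 promotes to a coordinatewise $\leq$), and strictly increasing in some coordinate across any segment carrying a $\lhd$-edge. Since a minimal mixed cycle must carry a $\lhd$-edge, the invariant cannot return to its starting value, a direct contradiction that uses minimality only for the normalization. The trade-off: the paper's contraction is shorter and reuses Lemma~\ref{n-elements} as a black box, at the price of a real-time overlap argument on low-level events; your invariant needs the five-shape bookkeeping (in effect re-proving Lemmas~\ref{2-elements} and \ref{3-elements} in a dual form), but it is purely order-theoretic, never mentions low-level events, and exhibits the ``potential function'' that explains why such cycles cannot exist. (Two small remarks: your ban on three consecutive $\lhd$-edges is harmless but not actually needed, since the bound of two updates per segment follows from the ban on consecutive $<$-edges alone; and your separate treatment of the one-$<$-edge case is also subsumed by the invariant argument.)
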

\begin{proof}
Assume for a contradiction that there are cycles in $<\cup\,\lhd$ and consider a cycle $(X_1,\dots,X_m)$ of minimal length. Since $<$ and $\lhd$ are both irreflexive, $m>2$. According to lemma \ref{lhd-no-cycles}, for some $i<m$ $X_i<X_{i+1}$ so we may assume w.l.o.g. that $X_1<X_2$. Since $m$ is assumed to be minimal, by the transitivity of $<$, necessarily $X_2\lhd X_3$. We consider two possible cases:
\begin{itemize}

\item[Case 1.] $X_1<X_2\lhd\dots\lhd X_m=X_1$. By lemma \ref{n-elements} $\neg(X_m<X_2)$ and hence for some $e\in X_2,e'\in X_m$, $e<e'$. So, since $X_1<X_2$ and $e\in X_2$ we get $X_1<e<e'\in X_1$. We have concluded that for some $e'\in X_1$, $X_1<e'$ and this is of course, a contradiction. 

\item[Case 2.] $X_1<X_2\lhd \dots\lhd X_k<X_{k+1}$ where $k+1\leq m$. By lemma \ref{n-elements}, for some $e\in X_2,e'\in X_k$, $e<e'$. Thus, $X_1<e<e'<X_{k+1}$ and we get that $X_1<X_{k+1}$, in contradiction to the minimality of $m$.

\end{itemize}
\end{proof}

As there are no cycles in $<\cup\,\lhd$, we conclude that $<\cup\,\lhd$ can be extended into a total ordering $\prec$. Indeed, we define $<^*$ to be the transitive closure of $<\cup\,\lhd$. Since $<\cup\,\lhd$ is an acyclic relation, $<^*$ is a partial ordering and hence can be extended into a total ordering $\prec$.

For completing the proof of theorem \ref{main-prop-of-section-1-correctness-condition} we argue that if $\prec$ is a linear extension of $<\cup\lhd$, then $\prec$ satisfies the sequential specification of the snapshot object (Figure \ref{snapshot-specification}). Since $<\cup\,\lhd$ can be extended into a linear ordering, theorem \ref{main-prop-of-section-1-correctness-condition} stems from the next lemma.

\begin{lemma}
If $\prec$ is a linear extension of $<\cup\,\lhd$, then $\prec$ satisfy the sequential specification.
\end{lemma}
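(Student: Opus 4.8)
The plan is to verify the two requirements of the sequential specification (Figure~\ref{snapshot-specification}) for the linear order $\prec$. The first requirement is almost immediate from the construction: $\prec$ is a total order by hypothesis, the events of $\class E$ are partitioned into \scan and \update events by definition of $\class E$, and the $n$ initial \update events precede every other high level event in $<$; since $\prec$ extends $<$, they also precede every other event in $\prec$, and each was executed by a distinct process as assumed in the preliminaries. Hence essentially all the content lies in the second requirement, which I would reduce to a single claim: for every (complete) \scan event $S$ and every $i<n$, the event $\alpha_i(S)$ is precisely the $\prec$-maximal $p_i$-\update event satisfying $\alpha_i(S)\prec S$.

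To prove this claim I would argue in two directions using only the definition of $\lhd$ and the antisymmetry of $\prec$. First, since $\alpha_i(S)\leq\alpha_i(S)$, the definition of $\lhd$ gives $\alpha_i(S)\lhd S$, and as $\prec$ extends $\lhd$ we obtain $\alpha_i(S)\prec S$; note that $\alpha_i(S)$ is a $p_i$-\update event lying in $\class E$, so it is a legitimate candidate. Second, for maximality, let $U$ be any $p_i$-\update event with $U\prec S$. If $U\not\leq\alpha_i(S)$, then by definition $\lhd$ puts $S\lhd U$, whence $S\prec U$, contradicting $U\prec S$ together with antisymmetry of $\prec$. Therefore $U\leq\alpha_i(S)$, and since $\prec$ extends $<$ this yields $U\preceq\alpha_i(S)$. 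Thus no $p_i$-\update event can be strictly $\prec$-between $\alpha_i(S)$ and $S$, which establishes the claim.

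Finally I would combine the claim with property~1 of Definition~\ref{definition-of-correct-functions}: writing $U_i$ for the maximal $p_i$-\update event with $U_i\prec S$ (such an event exists because the initial $p_i$-\update event precedes $S$ in $\prec$), the claim gives $U_i=\alpha_i(S)$, and property~1 gives $val_i(S)=val(\alpha_i(S))=val(U_i)$; hence $val(S)=(val(U_0),\dots,val(U_{n-1}))$, which is exactly requirement~2 of the specification. I do not expect a serious obstacle here, since the acyclicity work has already been carried out; the one point demanding care is the bookkeeping about $\class E$ — namely that every \update event and every complete \scan event referenced above indeed belongs to $\class E$, so that $val$ and property~1 apply, and that every \scan event of $\class E$ is complete, as already noted in the text preceding the definition of $\lhd$.
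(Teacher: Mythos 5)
Your proposal is correct and takes essentially the same route as the paper's proof: both reduce requirement~2 of the specification to the claim $\alpha_i(S)=U_i$, derive $\alpha_i(S)\prec S$ from $\alpha_i(S)\lhd S$, and rule out any $p_i$-\update event $U$ with $U\prec S$ but $U\not\leq\alpha_i(S)$ by observing that the definition of $\lhd$ would force $S\lhd U$ and hence $S\prec U$, then conclude via property~1. Your explicit verification of requirement~1 (totality, partition, precedence of the initial \update events) and of the bookkeeping about $\class E$ is detail the paper leaves implicit, not a different argument.
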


\begin{proof}
Let $S$ be a \scan event, we need to prove that $S$ returns $(val(U_0),\dots,val(U_{n-1}))$ where $U_i$ is the maximal $p_i$-\update event that precedes $S$ in $\prec$. By property 1, $S$ returns $(val(\alpha_0(S)),\dots,val(\alpha_{n-1}(S)))$ thus it suffice to prove that for each $i< n$, $\alpha_i(S)=U_i$.

By definition of $\lhd$, $\alpha_i(S)\lhd S$ and since $\prec$ extends $\lhd$, $\alpha_i(S)\prec S$. Thus,
$$\alpha_i(S)\preceq U_i.$$
If $U$ is a $p_i$-\update event so that $\alpha_i(S)<U$, then $S\lhd U$ and therefore, $S\prec U$. So, since $U_i\prec S$, $\alpha_i(S)<U_i$ is impossible.
However, $U_i$ and $\alpha_i(S)$ are both $p_i$-events, and hence  comparable in $<$ thus $U_i\leq\alpha_i(S)$. As $\prec$ extends $<$, we have $U_i\preceq \alpha_i(S)$. As a result,
$$\alpha_i(S)=U_i$$
follows as required.
\end{proof}

We proved that an execution $\tau$ of a snapshot algorithm $\snap$ is correct iff there are correct functions $\emptyvectoralpha$ so that $$\alphai i.$$ Of course, a snapshot algorithm is correct iff for every execution we can find correct functions as defined in definition \ref{definition-of-correct-functions}. In the next section we prove that when we deal with sb-algorithm it suffice to consider only a small class of executions to ensure the correctness of the algorithm.

\section{Simple Executions}

In this section we prove that if $\snap$ is an sb algorithm, then it is suffice to consider only some of the executions of $\snap$ in order to prove/disprove linearizability. 

The notion of an sb-algorithm is defined in section 2. For an sb-algorithm $\snap$ we define a set of executions, named \bi{simple executions}. In these executions, the \update procedures are invoked with only two different values thus w.l.o.g. we use $0$ and $1$ to denote these values.

\begin{definition}
Let $\tau$ be an execution. We say that $\tau$ is {\bfseries $\bos{(i,j)}$-simple} for two different integers $i,j< n$, if there are $r_i,r_j\in\f{N}$ such that the following hold.
\begin{enumerate}
\item Let $U$ be the $r$-th $p_i$-\update event. If $r<r_i$, then $U$ is an {\sf update($0$)} operation and if $r\geq r_i$, then $U$ is an {\sf update($1$)} operation.

\item In the same way, let $U$ be the $r$-th $p_j$-\update event. If $r<r_j$, then $U$ is an {\sf update($0$)} operation and if $r\geq r_j$, then $U$ is an {\sf update($1$)} operation.

\item All other \update procedure executions are invoked with the value 0. i.e. if $k\neq i,j$ and $U$ is a $p_k$-\update event, then $U$ is an {\sf update($0$)} event.
\end{enumerate}
An execution $\tau$ is {\bfseries simple} if it is $(i,j)$-simple for some different integers $i,j< n$.
\end{definition}

Thus, in simple executions all processes, excluding two of the processes, invoke only {\updatex 0} and \scan procedures. The remaining two processes at first execute {\updatex 0} and \scan operations, and at some point each process stops executing {\updatex 0} operations and starts executing {\updatex 1} operations.  
We claim that in order to prove the correctness of $\snap$, it suffice to prove that any simple execution is correct. This can be deduced from the following proposition:

\begin{proposition}
\label{simple-executions-are-suffice}
Let $\snap$ be an sb algorithm. If there is an incorrect execution $\tau$ of $\snap$, then there is a simple incorrect execution $\tau'$ of $\snap$.
\end{proposition}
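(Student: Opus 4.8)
The plan is to combine Theorem \ref{main-prop-of-section-1-correctness-condition} with the functions handed to us by the schedule-based assumption. Since $\tau$ is incorrect, Theorem \ref{main-prop-of-section-1-correctness-condition} says that \emph{no} correct functions exist for $\tau$. Let $\alpha_0^\tau,\dots,\alpha_{n-1}^\tau$ be the functions guaranteed by Definition \ref{algorithm-assumption}. Taking $\tau'=\tau$ in that definition (an execution is trivially similar to itself), requirement 2 becomes precisely property 1 of Definition \ref{definition-of-correct-functions}; hence $\alpha_0^\tau,\dots,\alpha_{n-1}^\tau$ satisfy property 1 in $\tau$. Because these functions are nonetheless not correct, they must violate at least one of properties 2--6.

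Two observations drive the argument. First, properties 2--6 refer only to the precedence relation $<$ and to the fixed functions, never to the \update values; so, for a fixed family of functions, their truth is an invariant of the similarity class of $\tau$, depending only on the shared schedule. Thus $\alpha^\tau$ violates the \emph{same} property in every execution similar to $\tau$. Second, each violation of a property among 2--6 is witnessed using at most two processes: properties 2, 3 and 4 name a single index $i$, while properties 5 and 6 name two indices $i,j$. Let $p_a,p_b$ be these (at most two) processes for the property that $\alpha^\tau$ violates, choosing an irrelevant second process when the witness names only one.

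I would then build an $(a,b)$-simple execution $\tau'$ similar to $\tau$ (such executions exist since the \update-arguments may be reassigned without affecting the schedule of an sb-algorithm). Every process other than $p_a,p_b$ writes only $0$, and the thresholds $r_a,r_b$ are placed so that the \update events occurring in the witness receive the values needed to reconstruct the bug. Concretely, for each witness \scan event $S$ and each $k\in\{a,b\}$ I put $r_k$ immediately before or after $\alpha_k^\tau(S)$ so as to fix $val_{\tau'}(\alpha_k^\tau(S))$, and hence $val_k(S)$ by property 1 for $\alpha^\tau$, at the required bit. For example, when property 6 fails through $S_1,S_2$ with $\alpha_a^\tau(S_1)<\alpha_a^\tau(S_2)$ and $\alpha_b^\tau(S_2)<\alpha_b^\tau(S_1)$, I place $r_a$ strictly between $\alpha_a^\tau(S_1)$ and $\alpha_a^\tau(S_2)$ and $r_b$ strictly between $\alpha_b^\tau(S_2)$ and $\alpha_b^\tau(S_1)$, yielding $val_a(S_1)=0$, $val_a(S_2)=1$, $val_b(S_2)=0$ and $val_b(S_1)=1$.

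The final and most delicate step is to show that $\tau'$ is genuinely incorrect, i.e.\ that \emph{no} correct functions exist for it, not merely that $\alpha^\tau$ fails. Here the $0/1$ restriction does the work: since $\tau'$ is simple, every value-$0$ \update event of a process precedes all of its value-$1$ \update events. Hence any family $\beta_0,\dots,\beta_{n-1}$ obeying property 1 in $\tau'$ is pinned to the correct side of each threshold on each witness \scan; e.g.\ $val_a(S_1)=0$ forces $\beta_a(S_1)$ to be an early (value-$0$) $p_a$-\update and $val_a(S_2)=1$ forces $\beta_a(S_2)$ to be a late (value-$1$) $p_a$-\update, so $\beta_a(S_1)<\beta_a(S_2)$, and symmetrically $\beta_b(S_2)<\beta_b(S_1)$. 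Reassembling these forced inequalities reproduces the violated property for the arbitrary $\beta$ (in the property 6 case, both $S_1<_\alpha S_2$ and $S_2<_\alpha S_1$ with $\beta$ in place of $\alpha$), so by Theorem \ref{main-prop-of-section-1-correctness-condition} the execution $\tau'$ is incorrect. I expect the routine part to be verifying this pinning argument separately for each of properties 2--6, and the genuine obstacle to be exactly the universal quantification over all candidate functions $\beta$, which the two-process restriction and the $0/1$ value gap are engineered to defeat.
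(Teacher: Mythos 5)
Your proposal is correct and follows essentially the same route as the paper's proof: extract the sb-functions $\alpha^\tau$ (which satisfy property 1 by Definition \ref{algorithm-assumption}), locate the violated property among 2--6, build a similar $(i,j)$-simple execution whose thresholds $r_i,r_j$ straddle the witnessing \update events, and use the $0/1$ value separation (all value-$0$ updates of a process precede its value-$1$ updates) to pin arbitrary candidate functions into reproducing the same violation, so that $\tau'$ has no correct functions at all. The paper works out the cases of properties 2 and 6 explicitly and leaves 3--5 to the reader, matching your generic recipe and your detailed treatment of property 6.
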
 

\begin{proof}
We fix an incorrect execution $\tau$ and we shall prove that there is an incorrect simple execution $\tau'$, which is similar (according to definition \ref{similar-executions}) to $\tau$. Recall that $\tau$ admits $n$ functions $\vectoralpha {\tau}$, $$\alpha_i^{\tau}:\text{ complete \scan events} \into p_i \text{ \update events}$$ that satisfy the properties of definition \ref{algorithm-assumption}. In particular, for each complete \scan event $S$,
$$val_\tau(S)=(val_\tau(\alpha_0^\tau(S)),\dots, val_\tau(\alpha_{n-1}^\tau(S))).$$
As $\tau$ is incorrect, the functions $\vectoralpha{\tau}$ are incorrect and hence one of properties 2-6 of definition \ref{definition-of-correct-functions} is violated (note that property 1 holds by the definition of the functions $\vectoralpha{\tau}$). The construction of the simple execution $\tau'$ is according to the property that fails to hold. We consider two cases: the case that property 2 fails and the case that property 6 fails. The cases in which one of properties 3-5 fails to hold are dealt similarly, and the construction of $\tau'$ in these cases is left to the reader. Before we continue we remind that if $\tau$ and $\tau'$ are similar and $E=(s,t)\in \f N\times (\f N\cup\{\infty\})$ a high level event in $\tau$, then it is also a high level event in $\tau'$. Furthermore, if $E$ is a \scan ($\update$) operation in $\tau$, then it is also a \scan ($\update$) event in $\tau'$.

\begin{itemize}
\item[Case 1.] Property 2 does not hold. Thus, for some $i,k\in n$ and a complete $p_k$-\scan event $S$, $S<\alpha_i^\tau(S)$, where $\alpha_i^\tau(S)$ is the $l$-th $p_i$-\update event. Write $U=\alpha_i^\tau(S)$, choose a process i.d. $j\neq i$ and consider the $(i,j)$-simple execution $\tau'$ defined by:
\begin{itemize}
\item $r_i=l$, $r_j=0$.
\item $\tau'$ is similar to $\tau$.
\end{itemize}
As $\tau'$ and $\tau$ are similar, $S$ and $U$ are \scan and \update events in $\tau'$ and by definition, $val_{\tau':i}(S)=val_{\tau'}(U)$. Moreover, as $\tau'$ is $(i,j)$-simple with $r_i=l$, 
$$val_{\tau':i}(S)=val_{\tau'}(U)=1.$$

It is left to prove that $\tau'$ is incorrect. Take $n$-functions $\emptyvectoralpha$
$$\alphai i$$and we shall prove that $\emptyvectoralpha$ are incorrect. Since $\emptyvectoralpha$ are arbitrary, the conclusion is that $\tau'$ is incorrect. 

If $U\leq\alpha_i(S)$, then $S<U\leq\alpha_i(S)$ and property 2 is violated. However, if $\alpha_i(S)<U$, then $val_{\tau'}(\alpha_i(S))=0$ and then property 1 fails to hold as $val_{\tau':i}(S)=1$. Thus, in any case the functions are incorrect and hence $\tau'$ is an incorrect simple execution as required.

\item[Case 2.] Property 6 does not hold. Therefore, there are some complete \scan events $S_1,S_2$ and $i,j< n$ so that 
$$\alpha_i^\tau(S_1)<\alpha_i^\tau(S_2), \ \alpha_j^\tau(S_2)<\alpha_j^\tau(S_1).$$
Assume that $\alpha_i^\tau(S_1)$ is the $t_1$-th $p_i$-\update event and that $\alpha_j^\tau(S_2)$ is the $t_2$-th $p_j$-\update event. Write $I_1=\alpha_i^\tau(S_1)$, $I_2=\alpha_i^\tau(S_2)$, $J_1= \alpha_j^\tau(S_1)$ and $J_2=\alpha_j^\tau(S_2).$

Let $\tau'$ be an $(i,j)$-simple execution so that
\begin{itemize}
\item $\tau'$ is similar to $\tau$.
\item $r_i=t_1+1$, $r_j=t_2+1$.
\end{itemize}

As $\tau'$ and $\tau$ are similar, note that $I_1,I_2$ are $p_i$-\update events in $\tau'$, $J_1,J_2$ are $p_j$-\update events in $\tau'$, and $S_1$, $S_2$ are complete \scan events in $\tau'$. Furthermore, since $\tau'$ and $\tau$ are similar and since $\tau'$ is $(i,j)$-simple with $r_i=t_1+1$, $r_j=t_2+1$ the following hold in $\tau'$:
\begin{enumerate}
\item $I_1<I_2$, $J_2<J_1$.
\item $val_{\tau':i}(S_1)=val_\tau'(I_1)=0$, $val_{\tau':j}(S_1)=val_\tau'(J_1)=1$.
\item $val_{\tau':i}(S_2)=val_\tau'(I_2)=1$, $val_{\tau':j}(S_2)=val_\tau'(J_2)=0$.

\end{enumerate}

As in the previous case, let $\emptyvectoralpha$ be $n$ functions 
$$\alphai i$$
and we need to verify that these functions are incorrect. If property 1 fails to hold we are done, and otherwise we have 
\begin{enumerate}
\item $val_{\tau'}(\alpha_i(S_1))=0$, $val_{\tau'}(\alpha_j(S_1))=1$.

\item $val_{\tau'}(\alpha_i(S_2))=1$, $val_{\tau'}(\alpha_j(S_2))=0$.

\end{enumerate} 

Since $I_1$ is the last $p_i$-\update event invoked with the value $0$ we conclude

\begin{equation}
\label{first}
\alpha_i(S_1)\leq I_1< \alpha_i(S_2).
\end{equation}

Similarly, since $J_2$ is the last $p_j$-\update event invoked with the value $0$ we conclude

\begin{equation}
\label{second}
\alpha_j(S_2)\leq J_2< \alpha_j(S_1).
\end{equation}

Equations \ref{first} and \ref{second} imply that $$\alpha_i(S_1)<\alpha_i(S_2) \andd \alpha_j(S_2)<\alpha_j(S_1)$$ thus in this case, property 6 fails to hold.
\end{itemize}
\end{proof}

By proposition \ref{simple-executions-are-suffice} we conclude.
\begin{theorem}
\label{reduction-to-simple-executions}
A snapshot algorithm $\snap$ is correct iff all the simple executions of $\snap$ are correct.
\end{theorem}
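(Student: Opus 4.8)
The plan is to derive the theorem directly from Proposition \ref{simple-executions-are-suffice}, since that proposition already contains all the substantive work and the theorem is essentially its repackaging as a biconditional. I would prove the two directions separately.

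For the ``only if'' direction, suppose $\snap$ is correct. By definition, a snapshot algorithm is correct exactly when every one of its executions is linearizable. Since every simple execution is in particular an execution of $\snap$, it follows immediately that every simple execution is correct. This direction is trivial and does not even invoke the schedule-based hypothesis.

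For the ``if'' direction I would argue by contraposition. Suppose $\snap$ is not correct, i.e.\ some execution $\tau$ of $\snap$ is incorrect. Because $\snap$ is a schedule-based algorithm, Proposition \ref{simple-executions-are-suffice} applies and produces a simple execution $\tau'$ of $\snap$ that is likewise incorrect. Hence not all simple executions of $\snap$ are correct. Contrapositively, if all simple executions of $\snap$ are correct then $\snap$ is correct, which is exactly the implication required.

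I do not anticipate any genuine obstacle at this stage: the entire difficulty has been absorbed into Proposition \ref{simple-executions-are-suffice}, whose proof builds the witnessing simple incorrect execution $\tau'$ by choosing the thresholds $r_i,r_j$ according to which of properties $2$--$6$ of Definition \ref{definition-of-correct-functions} fails for $\tau$. The one point worth flagging is that, as with the proposition, the backward implication rests essentially on $\snap$ being schedule-based; for an arbitrary snapshot algorithm no such reduction to simple executions is available. Thus the proof of the theorem consists simply of recording the two implications above, the first by unfolding the definition of correctness and the second by the contrapositive of Proposition \ref{simple-executions-are-suffice}.
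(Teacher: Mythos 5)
Your proof is correct and takes essentially the same route as the paper: the paper deduces Theorem \ref{reduction-to-simple-executions} directly from Proposition \ref{simple-executions-are-suffice} (``By proposition \ref{simple-executions-are-suffice} we conclude''), which is precisely your decomposition into the trivial forward direction and the contrapositive of the proposition for the backward direction. Your observation that the backward implication relies on $\snap$ being schedule-based is also apt, since the theorem's statement leaves that hypothesis implicit from the surrounding context.
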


Now, let $\snap$ be finite sb-snapshot algorithm. By our corollary, in order to check the correctness of $\snap$ it suffice to check only executions in which the \update events are invoked with the values $0$ or $1$. Recall that under this restriction only finite states of $\snap$ are reachable as $\snap$ is assumed to be finite. In \cite{AMP} Alur et al. proved that linearizability is decidable for algorithms with finite states and a fixed number of processes. Thus, based on Alur et al. result we have
\begin{corollary}
It is decidable to determine if a finite sb-snapshot algorithm is correct.
\end{corollary}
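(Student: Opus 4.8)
The plan is to derive the corollary by combining the reduction of Theorem \ref{reduction-to-simple-executions} with the decidability result of Alur et al. \cite{AMP}. First I would isolate the relevant finite object: let $\C A$ be the snapshot algorithm obtained from $\snap$ by running exactly the same code but allowing the \update procedures to be invoked only with the arguments $0$ and $1$. By the definition of a finite sb-snapshot algorithm, only finitely many configurations of $\C A$ are reachable, and the number of processes $n$ is fixed; thus $\C A$ is a finite-state implementation with a fixed number of processes, which is precisely the setting to which the theorem of \cite{AMP} applies.

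The heart of the argument is the equivalence that $\snap$ is correct if and only if $\C A$ is correct. The direction from left to right is immediate: every execution of $\C A$ is in particular an execution of $\snap$, so if all executions of $\snap$ are correct then so are all executions of $\C A$. For the converse I would invoke Theorem \ref{reduction-to-simple-executions}. Every simple execution invokes \update only with the values $0$ and $1$, and is therefore an execution of $\C A$; hence if $\C A$ is correct, all its executions are correct, and in particular every simple execution of $\snap$ is correct. Theorem \ref{reduction-to-simple-executions} then yields that $\snap$ is correct. Note that this step does not require $\C A$ itself to be schedule-based, since I apply the reduction theorem to $\snap$, not to $\C A$.

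Once this equivalence is in place, I would appeal to \cite{AMP}, where linearizability with respect to a sequential specification is shown to be decidable whenever the implementation is finite-state and the number of processes is fixed. Running their decision procedure on $\C A$ against the snapshot sequential specification of Figure \ref{snapshot-specification} decides whether $\C A$ is correct, and by the equivalence above this answers exactly the question of whether $\snap$ is correct. This gives the desired decidability.

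The one point requiring genuine care is the nontrivial direction of the equivalence: it is essential that simple executions use no values other than $0$ and $1$, so that they are honest executions of the finite system $\C A$ and Theorem \ref{reduction-to-simple-executions} can legitimately be brought to bear; the reverse inclusion, that every execution of $\C A$ is an execution of $\snap$, is trivial. A secondary, purely formal obligation is to confirm that the snapshot sequential specification fits the format of specifications handled by the procedure of \cite{AMP}, so that their theorem transfers to $\C A$ verbatim; I expect this to be routine rather than an obstacle.
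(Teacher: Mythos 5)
Your proposal is correct and follows essentially the same route as the paper: reduce via Theorem \ref{reduction-to-simple-executions} to executions using only the values $0$ and $1$, note that the algorithm is finite-state under this restriction by the paper's definition of a finite implementation, and then apply the decidability result of Alur et al. \cite{AMP}. Your explicit two-directional sandwich argument (simple executions $\subseteq$ executions of the restricted system $\C A$ $\subseteq$ executions of $\snap$) merely spells out a step the paper leaves implicit, and is a welcome clarification rather than a different approach.
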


\ignore{
However, the time complexity of the verification method in \cite{} is exponential in the number of (reachable) states of $\snap$. In the next section we present a novel approach. The time complexity of our method is polynomial in the number of states of $\snap$.}

\section{Conclusions}

In section 3 we defined a necessary and sufficient condition for correctness of executions of snapshot algorithms. Our condition relays upon the existence of functions $\alpha_0,\dots,\alpha_{n-1}$ between \scan event and \update events by $p_0,\dots,p_{n-1}$ respectively. A programmer is likely to be able to define these functions for hers implementation. Our condition provides programmers with a framework for implementing snapshot algorithms and proving correctness of snapshot algorithms.

We have defined the schedule-based notion. Here, we focus only on snapshot implementations but the schedule-based notion can be applied on other objects as well, such as stack, queue, etc. We use this notion to look at simple executions. Since verifying linearizability is EXSPSPACE-complete, it is important to seek for natural assumptions that concurrent implementations satisfy. This kind or research can potentially lead to constructing feasible verification techniques.

We proved that if $\snap$ is an sb-snapshot algorithm, then $\snap$ is correct iff all its simple executions are correct. Relaying on our theorem, we concluded that known verification techniques (for example \cite{AMP}) can be applied on finite sb-snapshot implementations. Recall that when we say that an implementation is finite, we mean that it is finite only when simple execution are assumed. Thus, our result is crucial for applying verification techniques such as the one in \cite{AMP}.

We consider this paper as a starting point for varied further research, as it raises many questions. Since verifying linearizability is decidable but not feasible, we suggest two approaches for overcoming this problem. First, we suggest to adapt assumptions on the algorithm verified to be correct. For example, in \cite{Vaf} the verification techniques use an assumption on the linearization points. However, this assumption exclude some known algorithms, for example the queue implementation in \cite{Lin}. We defined the schedule-based property and we argue that this property is very natural to assume. Second, we suggest to look at unique objects. Here we focus on the well-known snapshot object and we hope that our results can lead to designing polynomial verification tools for snapshot implementations. 

We set three main directions for further research in view of our ideas and results
\begin{itemize}
\item Finding conditions that resemble the properties in definition \ref{definition-of-correct-functions} for other objects and data structures.

\item Applying the notion of sb-algorithms on other objects, and finding a corresponding variants of our simple executions.

\item Use our reduction to simple executions to design polynomial automatic verification of sb-snapshot implementations. Replicating this approach for other objects and data structures.  
\end{itemize}

\end{document}